\def\BibTeX{{\rm B\kern-.05em{\sc i\kern-.025em b}\kern-.08em
    T\kern-.1667em\lower.7ex\hbox{E}\kern-.125emX}}
\newtheorem{assumption}{Assumption}
\newtheorem{proposition}{Proposition}
\newtheorem{theorem}{Theorem}
\newtheorem{lemma}{Lemma}
\newtheorem{remark}{Remark}
\title{\LARGE \bf
Tractable Infinite-Horizon Stochastic Model Predictive Control for Quantum Filtering via Eigenstate Reduction}
\author{Yunyan Lee, Ian R. Petersen, \IEEEmembership{Fellow, IEEE}, and Daoyi Dong, \IEEEmembership{Fellow, IEEE}
\thanks{Yunyan Lee and Ian R. Petersen are with the School of Engineering, The 
Australian National University, Canberra, ACT 2601, Australia. (email: {\tt\small Yun-Yan.Lee@anu.edu.au, ian.petersen@anu.edu.au})
} 
\thanks{Daoyi Dong is with the Australian Artificial Intelligence Institute, Faculty of Engineering and Information Technology, University of Technology Sydney, NSW 2007, Australia. (email: {\tt\small daoyidong@gmail.com})
}
}
\begin{document}
\maketitle
\thispagestyle{empty}
\pagestyle{empty}

\begin{abstract}
Model predictive control has shown potential to enhance the robustness of quantum control systems. In this work, we  propose a tractable Stochastic Model Predictive Control (SMPC) framework for finite-dimensional quantum systems under continuous-time measurement and quantum filtering. Using the almost-sure eigenstate reduction of quantum trajectories, we prove that the infinite-horizon stochastic objective collapses to a fidelity term that is computable in closed form from the one-step averaged state. Consequently, the online SMPC step requires only deterministic propagation of the filter and a terminal fidelity evaluation. An advantage of this method is that it eliminates per-horizon Monte Carlo scenario sampling and significantly reduces computational load while retaining the essential stochastic dynamics. We establish equivalence and mean-square stability guarantees, and validate the approach on multi-level and Ising-type systems, demonstrating favorable scalability compared to sampling-based SMPC.
\end{abstract}


\section{INTRODUCTION}
\label{sec:introduction}
Quantum feedback control strategies are typically classified into coherent quantum feedback and measurement-based feedback \cite{Zhang2017Quantum}. Coherent control utilizes quantum controllers interacting directly with the quantum plant \cite{James2008H, Maalouf2009Coherent, Nurdin2009Coherent}, with implementations in optical systems \cite{Mabuchi2008Coherent}, ion traps \cite{Lloyd2000Coherent}, and superconducting circuits \cite{vijay2012stabilizing}. In contrast, measurement-based feedback uses classical controllers\cite{Belavkin1992Quantum, Bouten2007An, Doherty1999Feedback, Mirrahimi2007Stabilizing, Handel2005Feedback, Wiseman1993Quantum}, and has been applied to tasks such as quantum error correction \cite{Chase2008Efficient}, entanglement stabilization \cite{Carvalho2008Controlling, Liu2016Comparing}, and control of atomic and BEC systems \cite{Campagne2016Using, Szigeti2009Continuous}.

In this work, we focus on quantum measurement-based feedback implemented via quantum filtering~\cite{belavkin1987non,Belavkin1992Quantum}. As a canonical physical realization of quantum filtering, the Laboratoire Kastler–Brossel photon-box experiment stabilizes cavity Fock states via real-time estimation and feedback, where probe atoms acquire phase information correlated with the photon number and the resulting measurement record is processed online to drive the cavity toward a target state~\cite{sayrin2011real,amini2013feedback}; see Fig.~\ref{fig:min_feedback_simple}.

Various approaches, such as Lyapunov-based control and stochastic optimal control methods, have been developed for quantum filtering problems~\cite{dong2010quantum,dong2023learning,wiseman2010quantum}. Several studies have further advanced the understanding of quantum filtering systems, including stability and convergence of continuous-time quantum filters~\cite{amini2014stability,song2016continuous}, robust feedback stabilization for multi-level systems~\cite{liang2021robust,liang2025exploring}, and estimation techniques for systems with unknown initial states~\cite{liang2020estimation}. A comprehensive tutorial on these topics can be found in~\cite{rouchon2022tutorial}.

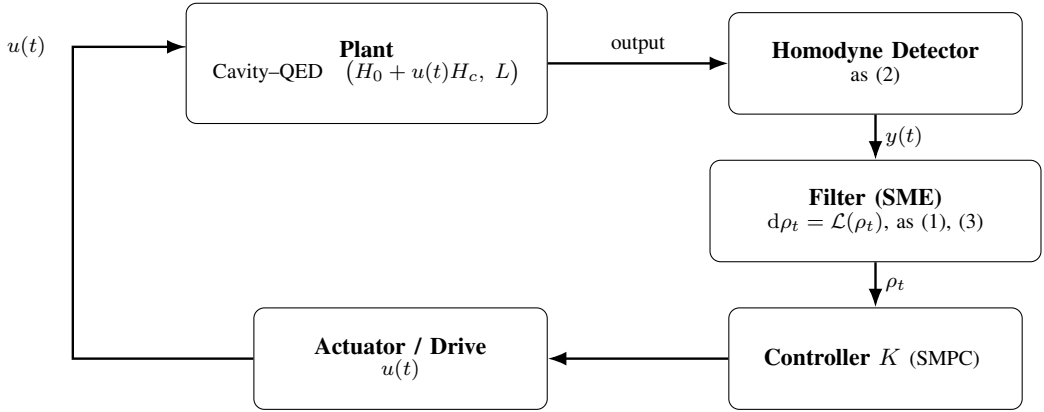
\begin{figure*}[t]
\centering
\begin{tikzpicture}[
  font=\small, >=Latex,
  box/.style={draw,rounded corners,minimum width=3.9cm,minimum height=1.35cm,align=center},
  boxL/.style={draw,rounded corners,minimum width=4.8cm,minimum height=1.6cm,align=center},
  arrow/.style={-Latex, line width=0.9pt}
]

\node[boxL] (Plant) {%
  \textbf{Plant}\\[-2pt]
  \footnotesize Cavity--QED\quad$\big(H_0+u(t)H_c,\;L\big)$
};

\node[box,right=2.4cm of Plant] (HD) {%
  \textbf{Homodyne Detector}\\[-1pt]
  \footnotesize as~\eqref{eqn:dy}
};

\node[box,below=0.6cm of HD,minimum width=4.4cm] (SME) {%
  \textbf{Filter (SME)}\\[-1pt]
  \footnotesize $\mathrm{d}\rho_t=\mathcal{L}(\rho_t)$, as~\eqref{eqn:SME},~\eqref{eqn:innovations}
};

\node[box,below=0.6cm of SME] (K) {%
  \textbf{Controller} $K$ \footnotesize(SMPC)
};

\node[box,left=2.4cm of K] (Act) {%
  \textbf{Actuator / Drive}\\[-2pt]\footnotesize $u(t)$
};

\draw[arrow] (Plant) -- node[above]{\footnotesize output} (HD);
\draw[arrow] (HD) -- node[right]{\footnotesize $y(t)$} (SME);
\draw[arrow] (SME) -- node[right]{\footnotesize $\rho_t$} (K);
\draw[arrow] (K) -- (Act);

\draw ($(Plant.west)+(0,0.22)$) -- ++(-0.18,0);

\draw[arrow]
  (Act.west) -| ($(Plant.west)+(-1.5,0.22)$)
  -- node[left,pos=-0.15]{\footnotesize $u(t)$} ($(Plant.west)+(0,0.22)$);

\end{tikzpicture}
\caption{Block diagram for measurement-based feedback. The homodyne output $y(t)$ is filtered to obtain $\rho_t$, which the controller uses to compute the drive $u(t)$ applied to the plant; see \eqref{eqn:SME}, \eqref{eqn:dy}, \eqref{eqn:innovations}. This corresponds to the measurement–feedback setting in \cite{sayrin2011real,amini2014stability,Bouten2007An}.}
\label{fig:min_feedback_simple}
\end{figure*}

Building on these foundations, we present a model predictive control (MPC) approach tailored to quantum filtering dynamics. MPC has become a key method for controlling systems with constraints and uncertainties, and it has been widely applied across various domains~\cite{grune2017nonlinear, schwenzer2021review}, including quantum systems~\cite{clouatre2022model, goldschmidt2022model, hashimoto2017stability, lee2024robust}. Motivated by the extensive literature on MPC for classical stochastic systems~\cite{Lorenzen2019Stochastic, McAllister2021Stochastic, Lorenzen2017Constraint, Cannon2011Stochastic, Carpintero2021Convergence}, we investigate its applicability to quantum filtering dynamics.

Stochastic MPC (SMPC) offers distinct advantages in quantum settings: it naturally handles physical constraints inherent to quantum hardware, systematically incorporates uncertainties from quantum noise and imperfect measurements, and enables predictive planning over a receding horizon to improve robustness and stability. However, the application of SMPC to general multi-level quantum systems encounters two major challenges: the curse of dimensionality due to exponential state-space growth, and the high computational burden of Monte Carlo sampling required to model measurement-induced randomness.

To overcome these difficulties, we propose a formulation of infinite-horizon SMPC tailored to quantum filtering dynamics. By using quantum reduction \cite{Adler2001Martingale, liang2019exponential}, namely, their almost-sure collapse to eigenstates of the measurement operator, we show that the infinite-horizon cost can be analytically reduced to a finite sum involving eigenstate probabilities. This leads to an equivalent optimization problem with a fidelity-based objective that preserves long-term control goals while significantly reducing computational complexity. The result is tractable for scalable SMPC on multi-level open quantum systems.

The organization of this paper is as follows. Section~\ref{sec:preliminary} outlines the quantum filtering equation and introduces the SMPC control problem. Section~\ref{sec:main_thm} presents our main result, reformulating the SMPC problem into a tractable form. Section~\ref{sec:stability} provides a detailed stability analysis of the resulting closed-loop system. Section~\ref{sec:PMP} describes a candidate control synthesis method based on Pontryagin's Maximum Principle (PMP). The effectiveness of the proposed approach is demonstrated through numerical simulations in Section~\ref{sec:numerical}.

\section{System Description and Problem Statement}
\label{sec:preliminary}
We consider a finite-dimensional open quantum system, described by a density matrix
$\rho(t) \in \mathbb{C}^{N \times N}$. The system evolves under the influence of both control
and continuous-time measurement. The evolution of $\rho(t)$ is governed by the stochastic master equation (SME) \cite{Bouten2007An}:
\begin{align}
\label{eqn:SME}
\begin{aligned}
\mathrm{d}\rho_t &= -i\,[H_0 + H_u(t),\, \rho_t]\,\mathrm{d}t
+ L \rho_t L^\dagger \,\mathrm{d}t - \tfrac{1}{2} \{L^\dagger L, \rho_t\} \,\mathrm{d}t \\
&\quad + \sqrt{\eta}\,\Big( L \rho_t + \rho_t L^\dagger
- \Tr[(L + L^\dagger)\rho_t]\,\rho_t \Big)\,\mathrm{d}W_t,
\end{aligned}
\end{align}
where $H_0$ is the Hamiltonian, $H_u(t)$ is the control Hamiltonian,
$L$ is the measurement operator, $\eta\in[0,1]$ is the measurement efficiency,
$\mathrm{d}W_t$ is a standard Wiener increment, $i=\sqrt{-1}$, and
$\Tr(\cdot)$ denotes the trace; we write $[A,B]=AB-BA$, $\{A,B\}=AB+BA$, and assume $\hbar=1$.

The observed output process corresponding to the homodyne photocurrent $Y_t$ satisfies
\begin{align}
\label{eqn:dy}
\mathrm{d}Y_t
= \sqrt{\eta}\,\Tr\!\big[(L{+}L^\dagger)\rho_t\big]\,\mathrm{d}t \;+\; \mathrm{d}W_t,
\end{align}
so that the innovations process used in \eqref{eqn:SME} is
\begin{align}
\label{eqn:innovations}
\mathrm{d}W_t
:= \mathrm{d}Y_t - \sqrt{\eta}\,\Tr\!\big[(L{+}L^\dagger)\rho_t\big]\,\mathrm{d}t,
\end{align}
which is a standard Wiener increment with respect to the filtration generated by $\{Y_s: s\le t\}$.
Equivalently, the instantaneous photocurrent $y(t)$ is
\begin{align}
\label{eqn:output}
y(t) \;=\; \sqrt{\eta}\,\Tr\!\big[(L{+}L^\dagger)\rho_t\big] \;+\; \xi(t),
\quad \text{with } \xi(t) := \tfrac{\mathrm{d}W_t}{\mathrm{d}t},
\end{align}
interpreting $\xi(t)$ as unit spectral-density white noise.

We begin with the configuration \( H_0 = L = J_z \), where \( J_z \) denotes the angular momentum operator along the \( z \)-axis, a standard choice in quantum spin and optical systems. In the absence of control, the SME dynamics drive the state \( \rho(t) \) almost surely to one of the eigenstates \( \tilde{\rho}_j = |\phi_j\rangle\langle \phi_j| \) with \( J_z |\phi_j\rangle = \lambda_j |\phi_j\rangle \)~\cite{Adler2001Martingale,liang2019exponential}. The control objective is to steer the system toward a designated eigenstate \( \rho_f = \tilde{\rho}_j \) via feedback based on the filtered state. A natural approach is to discretize the SME~\eqref{eqn:SME} and apply an SMPC framework developed for classical stochastic systems~\cite{Lorenzen2019Stochastic}. After considering this case, we relax the assumption and extend the analysis to more general configurations where \( H_0 \) and \( L \) need not coincide but share a common eigenbasis.

As discussed in~\cite{Lorenzen2019Stochastic}, applying SMPC to stochastic control of multi-level systems is hindered by the rapid growth of the scenario tree and the associated sample complexity, as detailed in Appendix. This challenge is exacerbated in quantum systems, where the state-space dimension grows exponentially with the number of qubits, rendering direct SMPC for quantum filtering dynamics computationally infeasible even for moderate system sizes.

To address this challenge, we introduce an infinite-horizon cost functional that captures the long-term behavior of the SME dynamics.
\begin{align}
    \label{eqn:SMPC-cost}
    J(u) := \lim_{T \to \infty}\frac{1}{T}\int_{\Omega}\int_{t}^{T} d_B^2\bigl(\rho(s,\omega,u), \rho_f\bigr)\, ds\, d\mu(\omega),
\end{align}
where \( \omega \in \Omega \) denotes the realization of measurement noise, \( \mu \) is the associated probability measure, and \( d_B(\rho_a, \rho_b) \) is the Bures distance:
\begin{align}
    d_B(\rho_a, \rho_b) := \sqrt{2 - 2\, \operatorname{Tr} \left( \sqrt{ \sqrt{\rho_b} \rho_a \sqrt{\rho_b} } \right)}.
\end{align}
In the special case where \( \rho_b = \tilde{\rho}_j = |\phi_j\rangle\langle \phi_j| \) is a pure state, this simplifies to
\begin{align}
    d_B(\rho_a, \tilde{\rho}_j) = \sqrt{2 - 2\, \operatorname{Tr}(\rho_a \tilde{\rho}_j)}.
\end{align}

Although the inclusion of an infinite-horizon objective in a stochastic, multi-level quantum setting may initially appear intractable, we show that the structure of quantum trajectory collapse enables a reduction of the cost in \eqref{eqn:SMPC-cost} to a form that is both computationally tractable and physically interpretable. This result provides a foundation for constructing an efficient SMPC scheme suitable for multi-level quantum systems.

We then formulate an SMPC controller that optimizes this cost subject to the system dynamics and input constraints. The optimization problem is defined as:
\begin{align}
\label{eqn: SMPC problem}
\begin{aligned}
    \min_{u(s)}\quad & J(u)  \text{ given in \eqref{eqn:SMPC-cost}}\\
    \text{s.t.}\quad & 
    \begin{cases}
    \rho(s)\big|_{s = t} = \rho(t),\\
        \rho(s) \text{ evolves under the SME given in \eqref{eqn:SME}}, \\
        u(s) \in \mathcal{U}, \quad \forall s \in [t, t + \Delta t],
    \end{cases}
\end{aligned}
\end{align}
where $\mathcal{U}$ denotes the set of admissible control inputs, and the optimization is performed over a short control horizon $[t, t+\Delta t]$, consistent with a receding horizon framework.

The SMPC controller \eqref{eqn: SMPC problem} captures the long-term control objective through an infinite-horizon cost while requiring only short-term optimization. In the following section, we use a property of quantum state reduction to derive an equivalent, tractable cost function, as formally stated in Theorem~\ref{thm:SMPC_equivalence}.

\section{Problem Reformulation via Eigenstate Decomposition}
\label{sec:main_thm}
In this section, we demonstrate that the infinite-horizon cost
defined in~\eqref{eqn:SMPC-cost} for a stochastic evolution
can be equivalently formulated as a deterministic cost function
based directly on the state fidelity.
To illustrate this connection, we first consider a simple yet nontrivial
example, the angular-momentum system with
$H_0=L=J_z$.
We then discuss how this result extends to a broader class of
measurement designs through appropriate choices of the operator~$L$.

\subsection{Angular-Momentum System}
\label{subsec:angular-system}
To make the infinite-horizon cost \eqref{eqn:SMPC-cost} analytically tractable, we reformulate it by using the structure of quantum state reduction, as characterized in Lemma~\ref{lem:quantum_state_reduction}. It has been shown in~\cite{liang2019exponential} that the solution of the SME~\eqref{eqn:SME} converges almost surely to one of the eigenstates of the measurement operator \( L \). Let \( \{\tilde{\rho}_j\}_{j=1}^N \) denote the set of these eigenstates, which form an orthonormal basis and satisfy \( \sum_j \tilde{\rho}_j = I \).

\begin{lemma}[Quantum State Reduction~{\cite{liang2019exponential}}]
\label{lem:quantum_state_reduction}
Consider the stochastic master equation~\eqref{eqn:SME} with \( u(s) = 0 \). Then, the set of eigenstates \( \{ \tilde{\rho}_j \}_{j=1}^N \) of the measurement operator \( L \) is exponentially stable both in expectation and almost surely. That is, the quantum state \( \rho(t) \) converges to one of the eigenstates \( \tilde{\rho}_j \), with average and sample Lyapunov exponents less than or equal to \( -\eta M/2 \), where \( \eta \) is the measurement efficiency and \( M > 0 \) is a constant determined by the spectrum of \( L \).

Moreover, the probability that \( \rho(t) \) converges to \( \tilde{\rho}_j \) is given by
\[
\mathbb{P}\left(\lim_{t \to \infty} \rho(t) = \tilde{\rho}_j \right) = \operatorname{Tr}[\rho(0)\tilde{\rho}_j], \quad \text{for } j = 1,\dots, N.
\]
\end{lemma}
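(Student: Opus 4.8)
The plan is to prove both assertions—exponential stability with Lyapunov exponent $-\eta M/2$ and the hitting-probability formula—through a martingale analysis of the measurement populations combined with a Lyapunov/Foster argument for the coherences. Throughout I exploit the quantum non-demolition structure of the configuration, namely that $H_0=L=J_z$ is diagonal in the eigenbasis $\{|\phi_j\rangle\}$, so that $J_z\tilde\rho_j=\tilde\rho_j J_z=\lambda_j\tilde\rho_j$ and $[J_z,\tilde\rho_j]=0$; the general case in which $H_0$ and $L$ merely share the eigenbasis is identical.

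First I would track the populations $p_j(t):=\operatorname{Tr}[\rho_t\tilde\rho_j]$. Applying It\^o's rule to the SME \eqref{eqn:SME} with $u=0$ and using the commutation relations above, the Hamiltonian and Lindblad drift terms cancel termwise on the diagonal, leaving the driftless dynamics
\begin{equation}
\mathrm{d}p_j = 2\sqrt{\eta}\,p_j\bigl(\lambda_j-\langle J_z\rangle_t\bigr)\,\mathrm{d}W_t,\qquad \langle J_z\rangle_t:=\textstyle\sum_k\lambda_k\,p_k(t).
\end{equation}
Hence each $p_j$ is a $[0,1]$-valued bounded martingale and, by the martingale convergence theorem, converges almost surely and in $L^2$ to a limit $p_j(\infty)$.

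Next I would argue that the limit concentrates on a single eigenstate. The mean $m_t:=\langle J_z\rangle_t$ is itself a bounded martingale with $\mathrm{d}m_t=2\sqrt{\eta}\,\mathrm{Var}_t(J_z)\,\mathrm{d}W_t$, so its quadratic variation gives $\mathbb{E}\!\int_0^{\infty}4\eta\,\mathrm{Var}_t(J_z)^2\,\mathrm{d}t=\mathbb{E}[m_\infty^2]-m_0^2<\infty$, forcing $\mathrm{Var}_t(J_z)\to 0$. Together with convergence of the $p_j$, this means the limiting state has zero $J_z$-variance, i.e. exactly one population equals $1$; thus $\rho(t)$ converges to some $\tilde\rho_j$ and $p_j(\infty)=\mathbf{1}\{\lim_t\rho(t)=\tilde\rho_j\}$ almost surely. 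The probability formula then follows from the martingale identity: bounded convergence gives $\mathbb{E}[p_j(\infty)]=p_j(0)$, whence $\mathbb{P}(\lim_{t\to\infty}\rho(t)=\tilde\rho_j)=\mathbb{E}[p_j(\infty)]=\operatorname{Tr}[\rho(0)\tilde\rho_j]$.

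The harder part is the exponential rate and the decay of the off-diagonal coherences, which the population argument alone does not control. Here I would construct a nonnegative Lyapunov function $V(\rho)$ vanishing exactly on $\{\tilde\rho_j\}$ and comparable to the squared distance to this set—for instance a gap-weighted sum of squared off-diagonal moduli together with the variance term—and compute its infinitesimal generator $\mathscr{A}V$ from \eqref{eqn:SME}. Using the spectral gaps of $J_z$ one expects a dissipation inequality $\mathscr{A}V\le-\eta M\,V$ with $M>0$ controlled by $\min_{j\ne k}(\lambda_j-\lambda_k)^2$, which yields $\mathbb{E}[V(\rho_t)]\le V(\rho_0)\,e^{-\eta M t}$ and hence the average Lyapunov exponent $-\eta M/2$ for the state (since $V$ scales as the squared distance). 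To upgrade to the sample-path exponent I would apply It\^o to $\log V(\rho_t)$: the drift is $\le-\eta M$ and the remaining martingale $N_t$ satisfies $\tfrac1t N_t\to 0$ almost surely by the strong law for local martingales, so $\limsup_t\tfrac1t\log V(\rho_t)\le-\eta M$ and the sample Lyapunov exponent is $\le-\eta M/2$. The main obstacle is exactly this step: exhibiting a single Lyapunov function that certifies decay of all coherences with a clean gap-controlled rate while keeping $\log V$ integrable near the target set and its driving martingale's quadratic variation linearly bounded in time—this is where the detailed estimates of \cite{liang2019exponential} do the essential work.
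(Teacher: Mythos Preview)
The paper does not supply its own proof of Lemma~\ref{lem:quantum_state_reduction}; it is quoted from \cite{liang2019exponential}. The closest in-paper argument is the proof of Lemma~\ref{lem:inv-subspace}, which specializes (rank-one projectors, $\mu_\alpha=\lambda_\alpha$) to give exactly Lemma~\ref{lem:quantum_state_reduction}. I compare against that.

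Your martingale treatment of the populations $p_j$ and the deduction of the hitting-probability formula $\mathbb{P}(\rho(\infty)=\tilde\rho_j)=p_j(0)$ coincide with parts (i) and (iv) of the paper's argument. Your route to almost-sure concentration is different: you pass through the quadratic variation of $\langle J_z\rangle_t$ to force $\mathrm{Var}_t(J_z)\to 0$, whereas the paper obtains concentration directly from the decay of its Lyapunov function. Both work, but note that integrability of $\mathrm{Var}_t(J_z)^2$ alone gives convergence to zero only along a subsequence; you implicitly use that the $p_j$ already converge so that $\mathrm{Var}_t(J_z)$ has a limit, which must then be zero. That step should be made explicit.

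The substantive gap is the exponential rate. You propose a Lyapunov function built from ``gap-weighted squared off-diagonal moduli together with the variance term'' and hope for $\mathscr{A}V\le -\eta M\,V$, but you neither specify $V$ nor verify the inequality, and you flag this as the main obstacle. The paper (and \cite{liang2019exponential,cardona2020exponential}) resolves it with a much simpler choice that avoids coherences entirely:
\[
V(\rho)=\sum_{j<k}\sqrt{p_j p_k}\,.
\]
Writing $\xi_j=\sqrt{p_j}$ and applying It\^o's rule, the concavity of the square root produces a negative drift $-\tfrac{\eta}{2}(\lambda_j-\lambda_k)^2\xi_j\xi_k$ in each summand, yielding $\mathscr{A}V\le -r\,V$ with $r=\tfrac{\eta}{2}\min_{j\ne k}(\lambda_j-\lambda_k)^2$, hence $\mathbb{E}[V(\rho_t)]\le e^{-rt}V(\rho_0)$. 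Two points make this choice decisive: (a) since $\rho\ge 0$ implies $|\rho_{jk}|^2\le p_j p_k$, the off-diagonal coherences are automatically dominated by $V$, so there is no need to track them separately; and (b) near the target set $V$ is comparable to the Bures distance (not its square), so decay at rate $r$ translates directly into the claimed Lyapunov exponent $-\eta M/2$ with $M=\min_{j\ne k}(\lambda_j-\lambda_k)^2$. Your plan to apply It\^o to $\log V$ for the sample exponent is the right idea and goes through for this $V$ as well. In short, the missing ingredient in your proposal is precisely this population-only, square-root Lyapunov function; once you have it, the rate and the coherence control come for free.
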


This property forms the foundation for the following equivalence result, which simplifies the original infinite-horizon SMPC problem \eqref{eqn: SMPC problem} into a tractable optimization problem.

\begin{theorem}[Equivalence of Infinite-Horizon SMPC]
\label{thm:SMPC_equivalence}
Consider the infinite-horizon SMPC problem defined in \eqref{eqn: SMPC problem}, with cost function given by \eqref{eqn:SMPC-cost}, and suppose the target state $\rho_f$ is an eigenstate $\tilde{\rho}_j$ of the measurement operator $L$. Then, the SMPC problem is equivalent to the following optimization problem:
\begin{align}
\label{eq:SMPC-final}
    \min_{u(s),\, s \in [t, t+\Delta t]} 2\left(1 - \operatorname{Tr}\left[\rho(t+\Delta t; u)\rho_f\right]\right),
\end{align}
subject to the system dynamics in \eqref{eqn:SME} and admissible control inputs $u(s) \in \mathcal{U}$.
\end{theorem}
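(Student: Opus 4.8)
The plan is to exploit the almost-sure eigenstate collapse from Lemma~\ref{lem:quantum_state_reduction} to evaluate the long-run time average in closed form, and then to push the ensemble expectation through the (linear) collapse-probability formula so that only the deterministically propagated averaged state survives. First I would fix a control $u$ on $[t,t+\Delta t]$ and take $u\equiv 0$ thereafter, so that Lemma~\ref{lem:quantum_state_reduction} governs the tail. For almost every noise realization $\omega$, the trajectory $\rho(s,\omega,u)$ converges to one of the eigenstates $\tilde\rho_{k(\omega)}$, so the integrand converges:
\begin{align*}
\lim_{s\to\infty} d_B^2\bigl(\rho(s,\omega,u),\rho_f\bigr) = d_B^2\bigl(\tilde\rho_{k(\omega)},\rho_f\bigr).
\end{align*}
Since a Ces\`aro average inherits the limit of its integrand, for a.e.\ $\omega$ the inner time average in~\eqref{eqn:SMPC-cost} equals $d_B^2(\tilde\rho_{k(\omega)},\rho_f)$; in particular the finite, control-active window $[t,t+\Delta t]$ contributes nothing after division by $T\to\infty$. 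Because $d_B^2\le 2$ is uniformly bounded, bounded convergence lets me interchange the time-average limit with the integral over $\Omega$, giving $J(u)=\int_\Omega d_B^2(\tilde\rho_{k(\omega)},\rho_f)\,d\mu(\omega)$.

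Next I would evaluate this ensemble average using orthonormality of the eigenstates. With $\rho_f=\tilde\rho_j$ a pure eigenstate, $d_B^2(\tilde\rho_k,\tilde\rho_j)=2-2\operatorname{Tr}(\tilde\rho_k\tilde\rho_j)=2(1-\delta_{kj})$, so only realizations that collapse away from the target contribute:
\begin{align*}
J(u) = 2\sum_{k\neq j}\mathbb{P}\bigl(\text{collapse to }\tilde\rho_k\bigr) = 2\bigl(1-\mathbb{P}(\text{collapse to }\tilde\rho_j)\bigr).
\end{align*}
It remains to express the collapse probability through the terminal filter state. Because the state $\rho(t+\Delta t)$ at the end of the control window is itself random, I would condition on the filtration $\mathcal{F}_{t+\Delta t}$ and apply the probability formula of Lemma~\ref{lem:quantum_state_reduction} to the free evolution that starts there, then use the tower property together with the linearity of $\rho\mapsto\operatorname{Tr}[\rho\,\tilde\rho_j]$:
\begin{align*}
\mathbb{P}(\text{collapse to }\tilde\rho_j)
= \mathbb{E}\bigl[\operatorname{Tr}[\rho(t+\Delta t)\tilde\rho_j]\bigr]
= \operatorname{Tr}\bigl[\mathbb{E}[\rho(t+\Delta t)]\,\tilde\rho_j\bigr].
\end{align*}
Taking the expectation of the SME~\eqref{eqn:SME} annihilates the It\^o term, so $\mathbb{E}[\rho(t+\Delta t)]$ is exactly the deterministically propagated averaged state $\rho(t+\Delta t;u)$ obeying the Lindblad master equation with control. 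Substituting yields $J(u)=2\bigl(1-\operatorname{Tr}[\rho(t+\Delta t;u)\rho_f]\bigr)$, which is precisely the objective in~\eqref{eq:SMPC-final}; hence the two minimization problems coincide.

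The step I expect to be the main obstacle is the passage from trajectory-wise collapse to the averaged terminal state: I must justify that the stochasticity accumulated during the nonzero-control window $[t,t+\Delta t]$ enters the infinite-horizon cost \emph{only} through the linear moment $\mathbb{E}[\rho(t+\Delta t)]$. This hinges on (i) the linearity of the collapse-probability map, (ii) the validity of the tower property with a random initial condition for the tail dynamics, and (iii) the vanishing of the transient window under Ces\`aro averaging. A secondary technical point is the rigorous interchange of the almost-sure limit with the $\Omega$-integral, which I would secure via the uniform bound $d_B^2\le 2$ and the a.s.\ convergence supplied by Lemma~\ref{lem:quantum_state_reduction}.
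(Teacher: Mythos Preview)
Your proposal is correct and follows essentially the same route as the paper: split off the finite control window (which vanishes under the $T^{-1}$ averaging), invoke almost-sure eigenstate collapse plus the uniform bound $d_B^2\le 2$ to reduce $J(u)$ to a weighted sum over collapse probabilities, and then push the expectation through the linear Born-rule map $\rho\mapsto\operatorname{Tr}(\rho\,\tilde\rho_j)$ so that only the deterministically averaged state $\rho(t+\Delta t;u)$ appears. Your invocation of the Ces\`aro limit and the tower property is a slightly more explicit phrasing of what the paper packages as ``dominated convergence'' and the ``Markov property,'' but the argument is the same.
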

\begin{proof}
We begin by decomposing the infinite-horizon cost functional~\eqref{eqn:SMPC-cost} into two parts:
\begin{align}
    J(u) 
    &= \lim_{T\to\infty}\frac{1}{T}\int_{\Omega}
       \int_{t}^{t+\Delta t} d_B^2(\rho(s,\omega,u),\rho_f)\,ds\,d\mu(\omega) \nonumber\\
    &\quad + \lim_{T\to\infty}\frac{1}{T}\int_{\Omega}
       \int_{t+\Delta t}^{T} d_B^2(\rho(s,\omega,u),\rho_f)\,ds\,d\mu(\omega). \label{eq:cost-split}
\end{align}

The first term corresponds to a finite-time integral over a fixed interval \([t, t+\Delta t]\), which is independent of \(T\). Since \(d_B^2(\cdot,\cdot) \leq 2\), this term is upper-bounded by \(2\Delta t/T\), which vanishes as \(T \to \infty\):
\begin{align}
    \lim_{T \to \infty} \frac{1}{T} \int_{\Omega}
    \int_{t}^{t+\Delta t} d_B^2(\rho(s,\omega,u),\rho_f)\,ds\,d\mu(\omega) = 0.
\end{align}

We now focus on the second term in \eqref{eq:cost-split}. Under the quantum filtering equation~\eqref{eqn:SME}, it is known that each quantum trajectory \(\rho(s,\omega,u)\) converges almost surely to one of the eigenstates \(\{\tilde\rho_j\}_{j=1}^N\) as \(s \to \infty\)~\cite{liang2018exponential}. Therefore, for almost every \(\omega\), we have
\begin{align}
    \lim_{s \to \infty} \rho(s,\omega,u) = \tilde\rho_j, \quad \text{for some } j.
\end{align}

Since \(d_B^2(\rho(s,\omega,u), \rho_f) \to d_B^2(\tilde\rho_j, \rho_f)\) almost surely and is uniformly bounded, we can apply the dominated convergence theorem to exchange the limit and the expectation:
\begin{align}
\label{eq:asymptotic-cost}
\begin{aligned}
      &\lim_{T\to\infty} \frac{1}{T} \int_{\Omega}
    \int_{t+\Delta t}^{T} d_B^2(\rho(s,\omega,u),\rho_f)\,ds\,d\mu(\omega)\\
    &= \sum_{j=1}^N d_B^2(\tilde\rho_j, \rho_f)\, \mathbb{P}_u[\rho(\infty) = \tilde\rho_j].
\end{aligned}
\end{align}

To evaluate the probability \(\mathbb{P}_u[\rho(\infty) = \tilde\rho_j]\), we use the fact that the SME defines a Markov process and that the collapse statistics obey the Born rule. Hence,
\begin{align}
\label{eq:collapse-prob}
\begin{aligned}
      \mathbb{P}_u[\rho(\infty) = \tilde\rho_j] 
    &= \mathbb{E}_\omega\left[ \mathbf{1}_{\{\rho(\infty,\omega;u) = \tilde\rho_j\}} \right] \\
    &= \mathbb{E}_\omega\left[ \operatorname{Tr}(\rho(t+\Delta t, \omega; u)\, \tilde\rho_j) \right] \\
    &= \operatorname{Tr}\left( \mathbb{E}_\omega[\rho(t+\Delta t, \omega; u)] \tilde\rho_j \right) \\
    &= \operatorname{Tr}\left( \rho(t+\Delta t; u) \tilde\rho_j \right),
\end{aligned}
\end{align}
where \( \mathbf{1}_{\{\rho(\infty,\omega;u) = \tilde\rho_j\}} \) is the indicator function that takes the value 1 when the trajectory collapses to \( \tilde\rho_j \), and 0 otherwise; the second equality follows from Lemma~\ref{lem:quantum_state_reduction}. Here, we define the averaged (deterministic) state after evolution as
\begin{align}
\label{eqn:avg_def}
    \rho(t+\Delta t; u) := \mathbb{E}_\omega[\rho(t+\Delta t, \omega; u)].
\end{align}

Substituting \eqref{eq:collapse-prob} into \eqref{eq:asymptotic-cost}, we obtain
\begin{align}
    J(u) = \sum_{j=1}^N d_B^2(\tilde\rho_j, \rho_f) \operatorname{Tr}(\rho(t+\Delta t; u)\, \tilde\rho_j). \label{eq:exact-cost}
\end{align}

Now, if \(\bar\rho_f\) denotes the target eigenstate, then by definition
\begin{align}
    d_B^2(\tilde\rho_j, \bar\rho_f) 
    = \begin{cases}
        0, & \text{if } j = f, \\
        2, & \text{if } j \neq f,
    \end{cases}
\end{align}
and the cost function \eqref{eq:exact-cost} can be simplified to
\begin{align}
\label{eq:final-SMPC-cost}
\begin{aligned}
      J(u) &= \sum_{j \ne f} 2 \cdot \operatorname{Tr}(\rho(t+\Delta t; u)\, \tilde\rho_j) \\
         &= 2 \left(1 - \operatorname{Tr}(\rho(t+\Delta t; u)\, \bar\rho_f) \right),
\end{aligned}
\end{align}
where we have used the relation \( \sum_j \tilde\rho_j = I \).
\end{proof}

As shown in Theorem~\ref{thm:SMPC_equivalence}, the infinite-horizon
SMPC cost can be equivalently expressed as a deterministic
fidelity-based cost function, enabling direct controller design.
We now extend this equivalence beyond the special case
$H_0=L=J_z$ to more general configurations.
In particular, we identify the conditions under which a suitable
target state and measurement operator~$L$
lead to an equivalent deterministic formulation.

\subsection{General Case}
\label{subsec:general}
The proof idea of Lemma~\ref{lem:inv-subspace} follows~\cite{cardona2020exponential};
we adapt their argument to extend Lemma~\ref{lem:quantum_state_reduction} 
to general quantum systems satisfying Assumption~\ref{ass:inv-subspaces}.

\begin{assumption}[Invariant subspaces]\label{ass:inv-subspaces}
There exists a family of orthogonal projectors
$\{\Pi_\alpha\}_{\alpha=1}^m$ such that
\begin{align}
[H_0,\Pi_\alpha]=0,\qquad [L,\Pi_\alpha]=0,\qquad
\sum_{\alpha=1}^m \Pi_\alpha = I .
\end{align}
\end{assumption}

\begin{lemma}
\label{lem:inv-subspace}
Suppose Assumption~\ref{ass:inv-subspaces} holds. Let
$p_\alpha(t):=\Tr(\rho_t\Pi_\alpha)$ denote the population of subspace
$\Pi_\alpha$ under the homodyne SME \eqref{eqn:SME} with efficiency
$\eta\in(0,1]$. Since $[L,\Pi_\alpha]=0$, $L$ and $\{\Pi_\alpha\}$
are simultaneously diagonalizable: if
$L=\sum_{j=1}^d \lambda_j\,\tilde\rho_j$ is a spectral decomposition,
then each $\Pi_\alpha$ is a sum of some $\tilde\rho_j$'s. For any pair
$(\alpha,\beta)$, define a gap
\begin{align}
\label{eqn:gap}
\Delta_{\alpha\beta}
:= \min\big\{|\lambda_j-\lambda_k|:\ 
\tilde\rho_j\le\Pi_\alpha,\ \tilde\rho_k\le\Pi_\beta\big\}.
\end{align}
Then:
\begin{enumerate}
  \item[(i)] $p_\alpha(t)$ is a bounded martingale for each $\alpha$, and
  $\sum_\alpha p_\alpha(t)=1$.

  \item[(ii)] The Lyapunov function 
  \begin{align}
      V(\rho):=\sum_{\alpha<\beta}\sqrt{p_\alpha p_\beta}
  \end{align}
  satisfies 
  \begin{align}
       \mathbb{E}\!\big[V(\rho_t)\big]\ \le\ e^{-rt}\,V(\rho_0),
  \qquad 
  r:=\tfrac{\eta}{2}\min_{\alpha\neq\beta}\Delta_{\alpha\beta}^2.
  \end{align}
  \item[(iii)] $\rho_t$ converges almost surely to the invariant set
  \begin{align}
      \{\rho:\ \exists\alpha,\ p_\alpha(\rho)=1\}.
  \end{align}
  \item[(iv)] For each $\alpha$,
  \begin{align}
        \mathbb{P}\!\big(\rho(\infty)\in\Pi_\alpha\big)
  =\mathbb{E}\!\big[p_\alpha(\infty)\big]
  =\Tr\!\big(\rho(0)\Pi_\alpha\big).
  \end{align}
\end{enumerate}

\end{lemma}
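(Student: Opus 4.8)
The plan is to establish the four claims in sequence, all resting on the martingale structure of the block populations $p_\alpha(t)=\Tr(\rho_t\Pi_\alpha)$ under the open-loop dynamics ($u=0$, the regime governing the asymptotic cost tail). For (i), I would apply It\^o's rule to $p_\alpha=\Tr(\rho_t\Pi_\alpha)$ and test the SME drift against $\Pi_\alpha$. Trace cyclicity with $[H_0,\Pi_\alpha]=0$ annihilates the commutator term, while $[L,\Pi_\alpha]=[L^\dagger,\Pi_\alpha]=0$ reduces the dissipator contribution $\Tr(L\rho_tL^\dagger\Pi_\alpha)-\tfrac12\Tr(\{L^\dagger L,\rho_t\}\Pi_\alpha)$ to $\Tr(\rho_t L^\dagger L\Pi_\alpha)-\Tr(\rho_t L^\dagger L\Pi_\alpha)=0$. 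The drift therefore vanishes, leaving $\mathrm{d}p_\alpha=\sigma_\alpha\,\mathrm{d}W_t$ with $\sigma_\alpha=\sqrt\eta\,[\Tr(L\rho_t\Pi_\alpha)+\Tr(\rho_t L^\dagger\Pi_\alpha)-\Tr((L+L^\dagger)\rho_t)\,p_\alpha]$. Boundedness $0\le p_\alpha\le1$ upgrades this local martingale to a true martingale, and $\sum_\alpha p_\alpha=\Tr(\rho_t I)=1$.

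For (ii) the heart is the generator $\mathcal A$ acting on $\sqrt{p_\alpha p_\beta}$. Since each $p_\alpha$ is driftless, the first-order It\^o terms are martingale and the second-order terms assemble into a perfect square,
\[
\mathcal A\sqrt{p_\alpha p_\beta}=-\frac18\,\frac{(p_\beta\sigma_\alpha-p_\alpha\sigma_\beta)^2}{(p_\alpha p_\beta)^{3/2}}.
\]
Evaluating the antisymmetric combination, the mean term $\Tr((L+L^\dagger)\rho_t)$ cancels, and on each invariant block $L\Pi_\alpha=\lambda_\alpha\Pi_\alpha$ yields $\Tr(L\rho_t\Pi_\alpha)+\Tr(\rho_t L^\dagger\Pi_\alpha)=2\lambda_\alpha p_\alpha$, so $p_\beta\sigma_\alpha-p_\alpha\sigma_\beta=2\sqrt\eta\,(\lambda_\alpha-\lambda_\beta)\,p_\alpha p_\beta$. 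Substituting gives $\mathcal A\sqrt{p_\alpha p_\beta}=-\tfrac{\eta}{2}(\lambda_\alpha-\lambda_\beta)^2\sqrt{p_\alpha p_\beta}\le-\tfrac{\eta}{2}\Delta_{\alpha\beta}^2\sqrt{p_\alpha p_\beta}$. Summing over $\alpha<\beta$ and replacing every gap by its minimum gives $\mathcal A V\le-rV$ with $r=\tfrac{\eta}{2}\min_{\alpha\ne\beta}\Delta_{\alpha\beta}^2$, and Dynkin's formula with Gr\"onwall's inequality yields $\mathbb E[V(\rho_t)]\le e^{-rt}V(\rho_0)$.

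Parts (iii) and (iv) are then probabilistic corollaries. Because $\mathcal A V\le0$, $V(\rho_t)$ is a nonnegative supermartingale and converges a.s.; combined with $\mathbb E[V(\rho_t)]\to0$ and Fatou's lemma this forces $V=0$ in the limit a.s., i.e. exactly one $p_\alpha$ equals $1$ (using $\sum_\alpha p_\alpha=1$), which is precisely the invariant set in (iii) --- invariant because $[H_0,\Pi_\alpha]=[L,\Pi_\alpha]=0$ keeps $\mathrm{Range}(\Pi_\alpha)$ invariant --- and compactness of the state space promotes $V\to0$ to convergence of $\rho_t$ to that set. For (iv), the bounded martingale $p_\alpha$ converges a.s. and in $L^1$ with $\mathbb E[p_\alpha(\infty)]=p_\alpha(0)=\Tr(\rho(0)\Pi_\alpha)$; by (iii) the limit equals the indicator $\mathbf 1_{\{\rho(\infty)\in\Pi_\alpha\}}$, whence $\mathbb P(\rho(\infty)\in\Pi_\alpha)=\mathbb E[p_\alpha(\infty)]=\Tr(\rho(0)\Pi_\alpha)$.

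I expect the main obstacle to be the generator step in (ii): both the perfect-square collapse and the clean cross term $(\lambda_\alpha-\lambda_\beta)p_\alpha p_\beta$ rely on $L$ acting as a single scalar $\lambda_\alpha$ on each block $\Pi_\alpha$, so that no cancellation degrades the bound; if a block carried several distinct eigenvalues the antisymmetric combination need not be sign-definite, so I would take $\{\Pi_\alpha\}$ to be the spectral projectors of $L$ (or first run the argument on the finer eigenprojectors). A secondary technical point is the non-smoothness of $\sqrt{p_\alpha p_\beta}$ on $\{p_\alpha p_\beta=0\}$, which I would handle by applying the It\^o/Dynkin estimate in the interior and passing to the limit, following \cite{cardona2020exponential,liang2019exponential}.
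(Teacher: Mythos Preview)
Your treatment of (i), (iii), and (iv) matches the paper's proof, and your It\^o computation in (ii) collapsing the generator of $\sqrt{p_\alpha p_\beta}$ to the perfect square $-\tfrac18(p_\beta\sigma_\alpha-p_\alpha\sigma_\beta)^2/(p_\alpha p_\beta)^{3/2}$ is correct and coincides with the paper's calculation. The gap is the step where you write $L\Pi_\alpha=\lambda_\alpha\Pi_\alpha$ to obtain $\Tr(L\rho_t\Pi_\alpha)+\Tr(\rho_tL^\dagger\Pi_\alpha)=2\lambda_\alpha p_\alpha$. Assumption~\ref{ass:inv-subspaces} only gives $[L,\Pi_\alpha]=0$, so each block $\Pi_\alpha$ may support several distinct eigenvalues $\{\lambda_j:j\in J_\alpha\}$ of $L$; there is no single scalar $\lambda_\alpha$. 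Your proposed workaround of refining to the eigenprojectors $\tilde\rho_j$ proves exponential decay of the \emph{fine} Lyapunov function $\sum_{j<k}\sqrt{q_jq_k}$, but this does not directly yield the stated decay of the \emph{coarse} $V=\sum_{\alpha<\beta}\sqrt{p_\alpha p_\beta}$ at the block-gap rate $r$, so the lemma as written remains unproved.

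The paper closes this gap by introducing a state-dependent block-averaged eigenvalue
\[
\mu_\alpha(t):=\frac{1}{p_\alpha(t)}\sum_{j\in J_\alpha}\lambda_j\,\Tr(\rho_t\tilde\rho_j),
\]
so that $dp_\alpha=2\sqrt\eta\,(\mu_\alpha-\langle L\rangle)\,p_\alpha\,dW$ holds exactly without any scalar-action hypothesis. Substituting into your perfect-square formula gives $\mathcal A\sqrt{p_\alpha p_\beta}=-\tfrac{\eta}{2}(\mu_\alpha-\mu_\beta)^2\sqrt{p_\alpha p_\beta}$; since $\mu_\alpha$ and $\mu_\beta$ are convex combinations of the eigenvalues in their respective blocks, the paper invokes $|\mu_\alpha-\mu_\beta|\ge\Delta_{\alpha\beta}$ to conclude $\mathcal AV\le -rV$. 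In short, the missing idea is to replace your fixed $\lambda_\alpha$ by the time-varying conditional average $\mu_\alpha(t)$; with that single substitution your argument becomes the paper's.
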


\begin{proof}
Consider a filtered probability space $(\Omega,\mathcal{F},\{\mathcal{F}_t\},\mathbb{P})$
supporting a standard Wiener process $W_t$ adapted to $\{\mathcal{F}_t\}$.
Here $\mathcal{F}_t$ denotes the observation filtration up to time $t$.

(i) For each $\alpha$, set $p_\alpha(t):=\Tr(\rho_t\Pi_\alpha)$ and
$\langle L\rangle_t:=\Tr(L\rho_t)$. Since $[H_0,\Pi_\alpha]=[L,\Pi_\alpha]=0$,
the SME \eqref{eqn:SME} gives
\begin{align}
    \begin{aligned}
        dp_\alpha(t)&=\Tr\!\big(\Pi_\alpha\,d\rho_t\big)\\
&=2\sqrt{\eta}\,\Big(\Tr(\Pi_\alpha L\rho_t)-\langle L\rangle_t\,p_\alpha(t)\Big)\,dW_t,
    \end{aligned}
\end{align}
with no $dt$ term. Hence $p_\alpha(t)\in[0,1]$ and, for any $\Delta t>0$,
\begin{align}
    \mathbb{E}\big[p_\alpha(t+\Delta t)\,\big|\,\mathcal{F}_t\big]=p_\alpha(t),
\end{align}
so $p_\alpha(t)$ is a martingale. Moreover,
\begin{align}
    \sum_\alpha p_\alpha(t)=\sum_\alpha\Tr(\rho_t\Pi_\alpha)
=\Tr\!\Big(\rho_t\sum_\alpha\Pi_\alpha\Big)=1,
\end{align}
which proves (i). By Doob’s martingale convergence theorem, $p_\alpha(t)\to p_\alpha(\infty)$ almost surely as $t\to\infty$.

(ii) Set $\xi_\alpha:=\sqrt{p_\alpha}$ so that 
\begin{align}
V(\rho)=\sum_{\alpha<\beta}\xi_\alpha\xi_\beta .
\end{align}
Because $[L,\Pi_\alpha]=0$, write 
\begin{align}
L=\sum_j \lambda_j \tilde\rho_j, 
\qquad 
\Pi_\alpha=\sum_{j\in J_\alpha}\tilde\rho_j .
\end{align}
Define the group–averaged eigenvalue
\begin{align}
\mu_\alpha(t):=\frac{1}{p_\alpha(t)}\sum_{j\in J_\alpha}\lambda_j\, 
\Tr(\rho_t\tilde\rho_j), 
\\
\varpi(\xi):=\sum_\gamma \mu_\gamma(t)\,\xi_\gamma^2=\Tr(L\rho_t).
\end{align}
From the SME and $[L,\Pi_\alpha]=0$ we have 
\begin{align}
dp_\alpha=2\sqrt{\eta}\,(\mu_\alpha-\varpi)\,p_\alpha\,dW,
\end{align}
hence by using the It\^o rule,
\begin{align}
d\xi_\alpha
&=\frac{1}{2\sqrt{p_\alpha}}\,dp_\alpha
-\frac{1}{8\,p_\alpha^{3/2}}(dp_\alpha)^2 \\
&=-\frac{\eta}{2}\,(\mu_\alpha-\varpi)^2\,\xi_\alpha\,dt
+\sqrt{\eta}\,(\mu_\alpha-\varpi)\,\xi_\alpha\,dW.
\end{align}
For $\alpha\neq\beta$, It\^o’s product rule gives
\begin{align}
\begin{aligned}
    d(\xi_\alpha\xi_\beta)
&=(d\xi_\alpha)\,\xi_\beta+\xi_\alpha\,(d\xi_\beta)+(d\xi_\alpha)(d\xi_\beta) \\
&=-\frac{\eta}{2}\![(\mu_\alpha-\varpi)^2+(\mu_\beta-\varpi)^2 \\
&\quad-2(\mu_\alpha-\varpi)(\mu_\beta-\varpi)]\xi_\alpha\xi_\beta\,dt \\
&\quad +\sqrt{\eta}\,(\mu_\alpha+\mu_\beta-2\varpi)\,\xi_\alpha\xi_\beta\,dW \\
&=-\frac{\eta}{2}\,(\mu_\alpha-\mu_\beta)^2\,\xi_\alpha\xi_\beta\,dt \\
&\quad+\sqrt{\eta}\,(\mu_\alpha+\mu_\beta-2\varpi)\,\xi_\alpha\xi_\beta\,dW.
\end{aligned}
\end{align}
Let $A$ denote the infinitesimal generator on $C^2$ test functions 
$f$ of the Markov process $p(t)$,
\begin{align}
A f\big(p(t)\big)
:=\lim_{h\to 0}\frac{1}{h}\Big(
\mathbb{E}[f(p(t+h))\,|\,\mathcal{F}_t]-f(p(t))\Big).
\end{align}
Taking conditional expectations yields
\begin{align}
A\big(\xi_\alpha\xi_\beta\big)
=-\frac{\eta}{2}\,(\mu_\alpha-\mu_\beta)^2\,\xi_\alpha\xi_\beta.
\end{align}
By definition of the inter–subspace spectral gap
\begin{align}
\Delta_{\alpha\beta}
:=\min\big\{|\lambda_j-\lambda_k|:\ 
\tilde\rho_j\le\Pi_\alpha,\ \tilde\rho_k\le\Pi_\beta\big\},
\end{align}
and since $\mu_\alpha(t)$ and $\mu_\beta(t)$ are convex combinations of
$\{\lambda_j\}_{j\in J_\alpha}$ and $\{\lambda_k\}_{k\in J_\beta}$, 
we have $|\mu_\alpha-\mu_\beta|\ge \Delta_{\alpha\beta}$ whenever 
$p_\alpha p_\beta>0$. Hence
\begin{align}
A\big(\xi_\alpha\xi_\beta\big)
\le -\frac{\eta}{2}\,\Delta_{\alpha\beta}^2\,\xi_\alpha\xi_\beta.
\end{align}
Summing over $\alpha<\beta$ gives
\begin{align}
\begin{aligned}
    A V(\rho)
&= \sum_{\alpha<\beta}A(\xi_\alpha\xi_\beta)\\
&\le -\,\frac{\eta}{2}\Big(\min_{\alpha\ne\beta}\Delta_{\alpha\beta}^2\Big)
\sum_{\alpha<\beta}\xi_\alpha\xi_\beta
= -\,r\,V(\rho),
\end{aligned}
\end{align}
with $r:=\tfrac{\eta}{2}\min_{\alpha\ne\beta}\Delta_{\alpha\beta}^2$. 
Therefore $V$ is a nonnegative supermartingale with exponential decay, and
\begin{align}
\mathbb{E}[V(\rho_t)]\le e^{-rt}V(\rho_0).
\end{align}

(iii) Since $V(\rho)\ge 0$ and $\mathbb{E}[V(\rho_t)]\to 0$, and
$V(\rho)=0$ if and only if some $p_\alpha=1$, it follows that
$V(\rho_t)\to 0$ almost surely; hence $\rho_t$ converges almost surely to
$\{\rho:\exists\alpha,\ p_\alpha(\rho)=1\}$. This proves (iii).

(iv) Because $p_\alpha(t)$ is a bounded martingale, we have
\[
\mathbb{E}[p_\alpha(\infty)]
=\mathbb{E}[p_\alpha(0)]
=p_\alpha(0)=\Tr(\rho(0)\Pi_\alpha).
\]
From (iii), $p_\alpha(\infty)\in\{0,1\}$ almost surely, so
\[
\mathbb{P}\big(\rho(\infty)\in\Pi_\alpha\big)
=\mathbb{E}[p_\alpha(\infty)]
=\Tr(\rho(0)\Pi_\alpha),
\]
which establishes (iv).
\end{proof}

Based on Lemma~\ref{lem:inv-subspace}, the infinite-horizon SMPC problem with cost 
function \eqref{eqn:SMPC-cost} remains tractable under the invariant–subspace 
assumptions. In particular, in the nondegenerate rank-one case with 
$\rho_f=\Pi_f$, the result reduces to Theorem~\ref{thm:SMPC_equivalence}, 
and this holds even when $H_0 \neq L$.

\begin{remark}
In the general case, our analysis requires the target state to be an eigenstate of the operator defining the invariant subspaces used in the SME reduction.
Practically, one may introduce a constant control bias \( \bar{u} \) such that
\[
\widetilde{H}_0 = H_0 + \bar{u} H_c,
\]
which ensures that the desired projector \( \Pi_{\mathrm{tar}} \) becomes an eigenprojector of \( \widetilde{H}_0 \).
If there exists a family of orthogonal projectors \( \{\Pi_\alpha\}_{\alpha=1}^m \) satisfying
\[
[\widetilde{H}_0, \Pi_\alpha] = 0, \qquad [L, \Pi_\alpha] = 0, \qquad \sum_{\alpha=1}^m \Pi_\alpha = I,
\]
then Assumption \ref{ass:inv-subspaces} holds with \( H_0 \) replaced by \( \widetilde{H}_0 \).
Consequently, Lemma \ref{lem:inv-subspace} and Theorem \ref{thm:SMPC_equivalence} remain valid under this constant bias, and the infinite-horizon cost still reduces to the one-step fidelity form.
\end{remark}

\section{Stability Analysis}
\label{sec:stability}

In this section, we analyze the stability properties of the proposed
stochastic model predictive control framework.
Building on Assumption~\ref{ass:interval_contraction_SME},
we show that the expected cost decreases monotonically
along the closed-loop trajectory and that the conditional state
converges to the desired target~$\Pi_f$.

\begin{assumption}[Contraction over a finite interval]
\label{ass:interval_contraction_SME}
Let $\mathcal U[t_a,t_b]$ denote the set of admissible open-loop controls $u(\cdot)$
on the finite interval $[t_a,t_b]$.
Consider the stochastic process $\rho_t$ governed by the SME~\eqref{eqn:SME} under
$u(\cdot)\in\mathcal U[t_a,t_b]$ with initial state $\rho_{t_a}$.
There exists a contraction factor $\gamma_{[t_a,t_b]}\in(0,1)$ such that, for every
$\rho_{t_a}$, there is a $u(\cdot)\in\mathcal U[t_a,t_b]$ satisfying
\begin{align}
2\!\left(1-\Tr\!\big(\rho_{t_b}\Pi_f\big)\right)
\;\le\;
\gamma_{[t_a,t_b]}\,
2\!\left(1-\Tr\!\big(\rho_{t_a}\Pi_f\big)\right)
\quad \text{a.s.}.
\label{eq:interval_contraction_pathwise}
\end{align}
\end{assumption}

Guided by Theorem~\ref{thm:SMPC_equivalence}, we design the controller using the averaged dynamics. These are obtained by taking the expectation over the measurement noise in the SME~\eqref{eqn:SME}, yielding the unconditional Lindblad dynamics
\begin{align}
\begin{aligned}
\frac{d}{ds}\,\rho(s)
&= \mathbb{E}_\omega\!\left[\frac{d}{ds}\,\rho_s^\omega\right] \\[2pt]
&= -\,i\!\left[\,H_0 + H_u(s),\,\rho(s)\,\right]
   + \kappa\,\mathcal D[L]\rho(s),
\end{aligned}
\label{eq:uncond-Lindblad-expect}
\end{align}
where $H_u(s)=u(s)H_c$. For convenience, given any $t_a\le t_b$, the associated stochastic and averaged
channels are defined by
\begin{align}
\rho_{t_b}^\omega  \;:&=\; \Phi_{[t_a,t_b]}(\rho_{t_a}),
\\
\bar\Phi_{[t_a,t_b]}(\rho_{t_a})
\;:&=\;\mathbb E_\omega\!\big[\Phi_{[t_a,t_b]}(\rho_{t_a})\big].
\label{eq:channels-def}
\end{align}

\begin{proposition}
\label{prop:expected_contraction}
Let $\{t_k\}$ be the sampling instants and define the predicted cost
\begin{align}
\label{eq:Jk_def}
J_k
&=\;
2\!\left(
1-\Tr\!\big(
\bar\Phi_{[t_k,\,t_k+\Delta t]}(\rho_{t_k})\Pi_f
\big)
\right),
\end{align}
where $\bar\Phi_{[t_a,t_b]}$ denotes the unconditional channel generated by
an admissible open-loop control
$u(\cdot)\in\mathcal U[t_a,t_b]$
satisfying Assumption~\ref{ass:interval_contraction_SME}.  
Then, for each $k$, there exists a contraction factor $\gamma_k\in(0,1)$ such that
\begin{align}
\label{eq:Jk_contr}
\mathbb{E}[J_{k+1}]
\;\le\;
\gamma_k\,\mathbb{E}[J_k].
\end{align}
Define the uniform bound
\begin{align}
\label{eq:gamma_bar_def}
\bar\gamma
\;:=\;
\sup_k \gamma_k,
\qquad
0<\bar\gamma<1.
\end{align}
Then $\mathbb{E}[J_k]\!\to\!0$ geometrically as $k\!\to\!\infty$,
and the conditional state $\rho_t$ converges to $\Pi_f$
in expectation and almost surely.
\end{proposition}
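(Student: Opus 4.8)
The plan is to exploit the affine (linear) structure of the terminal cost, which is the bridge between the deterministic averaged dynamics the controller uses and the underlying stochastic trajectory. First I would write $V(\rho):=2\left(1-\Tr(\rho\Pi_f)\right)$, so that $0\le V\le 2$ and $V$ is affine in $\rho$, and set $V_k:=V(\rho_{t_k})$ for the realized closed-loop state. Affineness together with the definition of the unconditional channel in \eqref{eq:channels-def} then gives the identity
\begin{align}
J_k
&= V\!\big(\bar\Phi_{[t_k,t_k+\Delta t]}(\rho_{t_k})\big) \\
&= \mathbb{E}\!\big[V(\rho_{t_{k+1}})\,\big|\,\mathcal{F}_{t_k}\big]
 = \mathbb{E}\!\big[V_{k+1}\,\big|\,\mathcal{F}_{t_k}\big],
\end{align}
since $V$ commutes with the expectation defining $\bar\Phi$. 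Taking a further expectation yields $\mathbb{E}[J_k]=\mathbb{E}[V_{k+1}]$, the relation I will use throughout.

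Second, I would transfer the pathwise contraction of Assumption~\ref{ass:interval_contraction_SME} to the cost. Because the control applied on $[t_k,t_k+\Delta t]$ is (by Assumption~\ref{ass:interval_contraction_SME}, or because the MPC-optimal input is no worse than the contracting one it is compared against) a contracting control from $\rho_{t_k}$, the bound \eqref{eq:interval_contraction_pathwise} gives $V_{k+1}\le\gamma_{[t_k,t_{k+1}]}V_k$ almost surely, hence
\begin{align}
\mathbb{E}\!\big[V_{k+1}\,\big|\,\mathcal{F}_{t_k}\big]=J_k\;\le\;\bar\gamma\,V_k .
\end{align}
with $\bar\gamma<1$ as in \eqref{eq:gamma_bar_def}. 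Applying this one interval later, taking expectations, and using $\mathbb{E}[J_k]=\mathbb{E}[V_{k+1}]$ produces the claimed recursion $\mathbb{E}[J_{k+1}]\le\gamma_k\,\mathbb{E}[J_k]$ with $\gamma_k:=\gamma_{[t_{k+1},t_{k+2}]}$. Iterating then gives $\mathbb{E}[J_k]\le\bar\gamma^{\,k}\,\mathbb{E}[J_0]\to 0$ geometrically.

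For the convergence statements I would argue as follows. Convergence in expectation is immediate from $\mathbb{E}[V_k]=2\big(1-\mathbb{E}\,\Tr(\rho_{t_k}\Pi_f)\big)\to 0$, so $\mathbb{E}\,\Tr(\rho_{t_k}\Pi_f)\to 1$. For the almost-sure statement I would note that $V_{k+1}\le\bar\gamma\,V_k$ a.s. iterates to $V_k\le\bar\gamma^{\,k}V_0\to 0$ a.s., so $\Tr(\rho_{t_k}\Pi_f)\to 1$ almost surely; equivalently, since $\mathbb{E}[V_{k+1}\mid\mathcal{F}_{t_k}]\le\bar\gamma V_k\le V_k$ exhibits $\{V_k\}$ as a nonnegative supermartingale, Doob's convergence theorem gives $V_k\to V_\infty$ a.s., and Fatou's lemma with $\mathbb{E}[V_k]\to 0$ forces $\mathbb{E}[V_\infty]=0$, whence $V_\infty=0$.

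I expect the main obstacle to be correctly reconciling the two distinct objects in play: the deterministic averaged-channel cost $J_k$ that the controller optimizes over \eqref{eq:Jk_def}, and the stochastic realized cost $V_k$ governed by the SME~\eqref{eqn:SME}. The crux is that Assumption~\ref{ass:interval_contraction_SME} supplies only a \emph{pathwise} contraction, whereas $J_k$ is a conditional expectation; the affineness of $V$ is precisely what lets the almost-sure bound survive the conditional expectation and identifies $J_k$ with $\mathbb{E}[V_{k+1}\mid\mathcal{F}_{t_k}]$. The remaining points—that $\rho_{t_k}$ and hence $J_k$ are $\mathcal{F}_{t_k}$-measurable, that a measurable feedback selection of the contracting control exists so the closed loop is well defined, and that the tower property applies—are routine bookkeeping once this correspondence is in place.
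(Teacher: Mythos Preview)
Your argument is clean and, in the special case $t_{k+1}=t_k+\Delta t$, essentially matches the paper's: you define $V(\rho)=2(1-\Tr(\rho\Pi_f))$, use its affineness to pass conditional expectations through $\bar\Phi$, apply the pathwise contraction of Assumption~\ref{ass:interval_contraction_SME} to get $J_k\le\gamma\,V_k$, and close with the supermartingale/Fatou argument (the paper does the same with $\mathcal W_k:=V(\rho_{t_k})$ in its ``bridge'' inequality $\mathbb{E}[\mathcal W_{k+1}\mid\mathcal F_{t_k}]\le J_k\le\gamma'_k\mathcal W_k$). Your direct pathwise iteration $V_{k+1}\le\bar\gamma V_k$ a.s.\ $\Rightarrow V_k\le\bar\gamma^{\,k}V_0$ is in fact a shorter route to a.s.\ convergence than the one the paper records.

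The gap is that you silently assume the sampling period equals the prediction horizon. Your key identity
\[
J_k \;=\; V\!\big(\bar\Phi_{[t_k,\,t_k+\Delta t]}(\rho_{t_k})\big)\;=\;\mathbb{E}\!\big[V(\rho_{t_{k+1}})\mid\mathcal F_{t_k}\big]
\]
is only valid when $t_{k+1}=t_k+\Delta t$; the proposition, however, is stated for general sampling instants $\{t_k\}$, and the paper's proof explicitly splits into two cases. When $t_{k+1}\ge t_k+\Delta t$, the control is only defined on $[t_k,t_k+\Delta t]$ and the paper bridges the remaining interval $[t_k+\Delta t,\,t_{k+1}]$ using Lemma~\ref{lem:inv-subspace}(i), namely that $\Tr(\rho_t\Pi_f)$ is a martingale under zero control, so $\Tr(\rho_{t_{k+1}}\Pi_f)=\Tr(\rho_{t_k+\Delta t}\Pi_f)$ in expectation. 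When $t_{k+1}<t_k+\Delta t$, the new horizon $[t_{k+1},t_{k+1}+\Delta t]$ overlaps the old one, and the paper uses a control-reuse (suboptimality) argument: continue the previous control $u^\star$ on the overlap and invoke Assumption~\ref{ass:interval_contraction_SME} on the tail $[t_k+\Delta t,\,t_{k+1}+\Delta t]$. Without these two ingredients your identification of $J_k$ with $\mathbb{E}[V_{k+1}\mid\mathcal F_{t_k}]$ fails, and the recursion $\mathbb{E}[J_{k+1}]\le\gamma_k\,\mathbb{E}[J_k]$ does not follow. If you restrict to $t_{k+1}-t_k=\Delta t$ your proof is complete; to cover the proposition as stated you need to insert the martingale bridge and the overlap argument.
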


\begin{proof}
We distinguish two cases depending on the relation between the prediction window 
$t_k+\Delta t$ and the next sampling instant $t_{k+1}$.

\medskip
\noindent
\textbf{Case 1.} $t_{k+1} \geq t_k+\Delta t$.\\
By Assumption~\ref{ass:interval_contraction_SME},
for any stochastic trajectory
$\rho_{t_{k+1}}^\omega$
(generated from
$\Phi_{[t_k,\,t_{k+1}]}(\rho_{t_k})$),
there exists a control 
$u(\cdot)\!\in\!\mathcal U[t_{k+1},\,t_{k+1}+\Delta t]$
such that
\begin{align}
\label{eq:assumed_contr_case1}
\begin{aligned}
&\Tr\!\big(
\bar\Phi^{\omega}_{[t_{k+1},\,t_{k+1}+\Delta t]}
(\rho_{t_{k+1}}^\omega)\Pi_f
\big) \\
&\ge\;
\gamma_{[t_{k+1},\,t_{k+1}+\Delta t]}^\omega\,
\Tr(\rho_{t_{k+1}}^\omega\Pi_f).
\end{aligned}
\end{align}
Here, $0<
\gamma_{[t_{k+1},\,t_{k+1}+\Delta t]}^\omega
<1$ and $\bar\Phi^{\omega}$ indicates that the open-loop control
on $[t_{k+1},\,t_{k+1}+\Delta t]$
is designed based on the stochastic initial state $\rho_{t_{k+1}}^\omega$.
Taking expectation and bounding the pathwise factors by their supremum gives
\begin{align}
\label{eq:case1-expect}
\begin{aligned}
&\mathbb{E}_\omega\!\left[
\Tr\!\big(
\bar\Phi^{\omega}_{[t_{k+1},\,t_{k+1}+\Delta t]}
(\rho_{t_{k+1}}^\omega)\Pi_f
\big)
\right]\\
&\ge\;
\gamma^{\star}_{[t_{k+1},\,t_{k+1}+\Delta t]}\,
\Tr(\rho_{t_{k+1}}\Pi_f),
\end{aligned}
\end{align}
where $\gamma^{\star}_{[t_{k+1},\,t_{k+1}+\Delta t]}:=\sup_\omega
\gamma_{[t_{k+1},\,t_{k+1}+\Delta t]}^\omega
\in(0,1)$.
Using Lemma~\ref{lem:inv-subspace}, the probability is a martingale, so it follows
\begin{align}
\label{eq:martingale}
\Tr(\rho_{t_{k+1}}\Pi_f)
\;=\;
\Tr(\rho_{t_k+\Delta t}\Pi_f),
\qquad
(t_{k+1} \geq t_k+\Delta t).
\end{align}
Combining
\eqref{eq:case1-expect}
and
\eqref{eq:martingale},
\begin{align}
\begin{aligned}
&\mathbb{E}_\omega\!\left[
\Tr\!\big(
\bar\Phi^{\omega}_{[t_{k+1},\,t_{k+1}+\Delta t]}
(\rho_{t_{k+1}}^\omega)\Pi_f
\big)
\right]\\
&\ge\;
\gamma^{\star}_{[t_{k+1},\,t_{k+1}+\Delta t]}\,
\Tr(\rho_{t_k+\Delta t}\Pi_f).
\end{aligned}
\end{align}
Since 
\begin{align}
J_k
&=2\!\left(
1-\Tr\!\big(
\bar\Phi_{[t_k,\,t_k+\Delta t]}(\rho_{t_k})\Pi_f
\big)
\right),\\
J_{k+1}
&=2\!\left(
1-\Tr\!\big(
\bar\Phi_{[t_{k+1},\,t_{k+1}+\Delta t]}
(\rho_{t_{k+1}})\Pi_f
\big)
\right),
\end{align}
we conclude
\begin{align}
\label{eq:case1-J}
\mathbb{E}[J_{k+1}]
\;\le\;
\gamma^{\star}_{[t_{k+1},\,t_{k+1}+\Delta t]}\,
\mathbb{E}[J_k].
\end{align}

\medskip
\noindent
\textbf{Case 2.} $t_k+\Delta t > t_{k+1}$.\\
Based on Theorem~\ref{thm:SMPC_equivalence},
the control is designed from the unconditional Lindblad equation \eqref{eq:uncond-Lindblad-expect}.  
Consider the suboptimal controller $u^\star$
defined on $[t_k,\,t_k+\Delta t]$.
Although
$\mathbb{E}_\omega\!\left[\Tr\!\big(
\bar\Phi^{\omega}_{[t_{k+1},\,t_{k+1}+\Delta t]}
(\rho_{t_{k+1}}^\omega)\Pi_f
\big)\right]$
is taken over different initial states,
we can apply the same suboptimal control $u^\star$ 
defined on $[t_k,\,t_k+\Delta t]$.
The segment $[t_k,\,t_{k+1}]$
already follows this controller,
and on $[t_{k+1},\,t_{k+1}+\Delta t]$
we can still use the same $u^\star$.
Hence, for this fixed suboptimal process,
\begin{align}
\label{eq:factor-decomp-1}
\begin{aligned}
\bar\Phi^{\omega}_{[t_{k+1},\,t_{k+1}+\Delta t]}
&=\;
\bar\Phi^{\psi}_{[t_{k+1},\,t_k+\Delta t]}
\circ
\bar\Phi^{\omega}_{[t_{k}+\Delta t,t_{k+1}+\Delta t]},
\end{aligned}
\end{align}
where
$\bar\Phi^{\psi}_{[t_{k+1},\,t_k+\Delta t]}$
denotes the channel with the same suboptimal control designed for $\rho_k$.
Since the continued control is suboptimal
with respect to the new control optimized at $t_{k+1}$, we have
\begin{align}
\label{eq:case2-a}
\begin{aligned}
&\mathbb{E}_\omega\!\left[
\Tr\!\big(
\bar\Phi^{\omega}_{[t_{k+1},\,t_{k+1}+\Delta t]}
(\rho_{t_{k+1}}^\omega)\Pi_f
\big)
\right]\\
&=\mathbb{E}_\omega\!\left[
\Tr\!\big(
\Phi^{\omega}_{[t_{k+1},\,t_{k+1}+\Delta t]}
(\Phi_{[t_k,\,t_{k+1}]}(\rho_{t_{k}}))\Pi_f
\big)
\right]\\
&\ge\;
\gamma^{\star}_{[t_{k}+\Delta t,\,t_{k+1}+\Delta t]}\,
\Tr\!\big(
\bar\Phi^{\psi}_{[t_{k+1},\,t_k+\Delta t]}
(\bar\Phi_{[t_k,\,t_{k+1}]}(\rho_{t_{k}})\Pi_f
\big).
\end{aligned}
\end{align}
Here, $\gamma^{\star}_{[t_{k}+\Delta t,\,t_{k+1}+\Delta t]}:=\sup_\omega
\gamma_{[t_{k}+\Delta t,\,t_{k+1}+\Delta t]}^\omega
\in(0,1)$ and $\rho_{t_{k+1}}^\omega$ can be obtained from the channel~\eqref{eqn:SME} with the suboptimal control applied to the initial state~$\rho_{t_k}$. Therefore,
\begin{align}
\begin{aligned}
&\mathbb{E}_\omega\!\left[
\Tr\!\big(
\bar\Phi^{\omega}_{[t_{k+1},\,t_{k+1}+\Delta t]}
(\rho_{t_{k+1}}^\omega)\Pi_f
\big)
\right]\\
&\ge
\gamma^{\star}_{[t_{k}+\Delta t,\,t_{k+1}+\Delta t]}\,
\Tr\!\big(
\bar\Phi_{[t_k,\,t_k+\Delta t]}
(\rho_{t_k})\Pi_f
\big).
\end{aligned}
\end{align}
This inequality directly yields
\begin{align}
\label{eq:case2-J}
\mathbb{E}[J_{k+1}]
\;\le\;
\gamma^{\star}_{[t_{k}+\Delta t,\,t_{k+1}+\Delta t]}\,\mathbb{E}[J_k].
\end{align}

Combining \eqref{eq:case1-J} and \eqref{eq:case2-J} yields a sequence
\(\{\gamma_k\}\subset(0,1)\) with \(\gamma_k\le \bar\gamma<1\), where
\(\bar\gamma:=\sup_k \gamma_k\), such that
\begin{align}
\begin{aligned}
\mathbb{E}[J_{k+1}]
&\le \gamma_k\,\mathbb{E}[J_k]
\;\le\; \bar\gamma\,\mathbb{E}[J_k].
\end{aligned}
\end{align}
Iterating gives
\begin{align}
\mathbb{E}[J_k]\;\le\;\bar\gamma^{\,k}\,\mathbb{E}[J_0]\xrightarrow[k\to\infty]{}0.
\end{align}

Define $\mathcal W_k:=2\bigl(1-\Tr(\rho_{t_k}\Pi_f)\bigr)\in[0,2]$ and since 
the definition of $J_k$ satifies \eqref{eq:Jk_def}.
Then the following holds:
\begin{align}
\label{eq:bridge}
\mathbb{E}\!\big[\mathcal W_{k+1}\mid\mathcal F_{t_k}\big]
\;\le\; J_k \;\le\; \gamma'_k\,\mathcal W_k
\quad\text{a.s.},
\end{align}
where $\gamma'_k\in(0,1)$ is the contraction factor from
Assumption~\ref{ass:interval_contraction_SME}.
The right inequality follows directly from Assumption~\ref{ass:interval_contraction_SME}
applied on $[t_k,t_k{+}\Delta t]$.
For the left inequality, note that by linearity of trace and conditional expectation,
$$\mathbb{E}[\Tr(\rho_{t_k+\Delta t}\Pi_f)\mid\mathcal F_{t_k}]
=\Tr(\bar\Phi_{[t_k,\,t_k+\Delta t]}(\rho_{t_k})\Pi_f).$$
If $t_{k+1}\ge t_k+\Delta t$, Lemma~\ref{lem:inv-subspace} yields equality
$$\mathbb{E}[\mathcal W_{k+1}\mid\mathcal F_{t_k}]=J_k;$$ if $t_{k+1}<t_k+\Delta t$,
the same controller on $[t_k,t_k+\Delta t]$ and monotonicity over
$[t_{k+1},t_k{+}\Delta t]$ imply
$$\mathbb{E}[\mathcal W_{k+1}\mid\mathcal F_{t_k}]\le J_k.$$

From \eqref{eq:bridge} we have the drift inequality
\begin{align}
\mathbb{E}[\mathcal W_{k+1}\mid\mathcal F_{t_k}]
\;\le\; \gamma_k\,\mathcal W_k
\quad\text{a.s.},
\end{align}
hence $(\mathcal W_k)$ is a bounded nonnegative supermartingale with
\(\mathbb{E}[\mathcal W_{k+1}] \le \bar\gamma\,\mathbb{E}[\mathcal W_k]\),
where $\bar\gamma:=\sup_k\gamma_k<1$.
It follows that $\mathbb{E}[\mathcal W_k]\to 0$ and, by the supermartingale
convergence theorem and Fatou’s lemma~\cite{durrett2019probability},
$\mathcal W_k\to 0$ a.s.
Therefore,
\[
\Tr(\rho_{t_k}\Pi_f)\xrightarrow[k\to\infty]{\text{a.s.}}1;
\]
i.e., the conditional state converges to the target projector $\Pi_f$ almost surely.
\end{proof}

\begin{remark}[Control of Angular Momentum Systems]
Consider an angular momentum system with $H_0=J_z$ and control Hamiltonians
$H_c\in\{J_x,J_y\}$, the continuous-time evolution satisfies
Assumption~\ref{ass:interval_contraction_SME}.  
Since $J_x$ and $J_y$ couple all adjacent eigenspaces of $J_z$,
any nonstationary state $\rho\neq\Pi_f$, thus strictly increases the fidelity
$\Tr(\rho_t\Pi_f)$ over time, ensuring a contraction factor
$\gamma_{[t_a,t_b]}\!\in(0,1)$ for every finite interval $[t_a,t_b]$.
\end{remark}

\begin{remark}[Control of General Systems]
For general open quantum systems where $H_0$ and $L$ are not simultaneously
diagonalizable.  
In such cases, the contraction property depends critically on the structure of
the control Hamiltonian~$H_c$.  
If $H_c$ fails to couple the relevant eigenspaces of~$H_0$,
the resulting evolution may stagnate within invariant manifolds, 
and the fidelity can remain constant.  
Therefore, for general systems one must carefully select $H_c$
to ensure nontrivial coupling between the dominant eigenspaces of~$H_0$
associated with the measurement operator~$L$,
so that a contraction factor $\gamma_{[t_a,t_b]}\!<1$ exists for each finite interval.
\end{remark}

\begin{remark}[Numerical Limitation]
In numerical implementations, however, the SME is integrated with a finite step size~$dt$,
and the total number of discrete updates $N=(t_b-t_a)/dt$ is finite.  
Consequently, for certain initial conditions or limited iteration counts,
the computed trajectory may not yet exhibit a strict decrease in the cost
within~$[t_a,t_b]$, even though the continuous dynamics remain contractive.
This finite-iteration effect is purely numerical and does not violate
the theoretical contraction property.
\end{remark}

\section{Optimal Control via Pontryagin’s Maximum Principle}
\label{sec:PMP}

To characterize the optimal control for the problem stated in Theorem~\ref{thm:SMPC_equivalence}, we apply PMP to the equivalent formulation in \eqref{eq:SMPC-final}. The resulting conditions provide a set of necessary criteria for local optimality within this predictive framework.

To obtain the averaged state \( \rho(t+\Delta t; u) \) defined in \eqref{eqn:avg_def}, we take the expectation of the SME~\eqref{eqn:SME} with respect to all measurement noise realizations \( \omega \). Let \( \rho_s^\omega \) denote the quantum trajectory corresponding to a particular realization of the Wiener process \( W_t \). Taking the expectation yields the deterministic evolution \eqref{eq:uncond-Lindblad-expect}.

To derive necessary conditions for optimality, we associate a Hermitian costate matrix \( \lambda(s) \) with the averaged state \( \rho(s) \), and define the corresponding Hamiltonian functional as
\begin{align}
    \mathcal{H}(\rho, u, \lambda) = \Tr\left[ \lambda(s) \left( -i[H_0 + H_u(s), \rho(s)] + \mathcal{D}[L]\rho(s) \right) \right].
\end{align}

The necessary conditions for optimality are:
\begin{itemize}
    \item \textbf{State equation}: \eqref{eq:uncond-Lindblad-expect} with initial condition \( \rho(s) = \rho(t) \).
    \item \textbf{Costate equation}:
    \begin{align}
        \dot{\lambda}(s) = -\frac{\partial \mathcal{H}}{\partial \rho}(\rho(s), u(s), \lambda(s)),
    \end{align}
    with terminal condition
    \begin{align}
        \lambda(t+\Delta t) = -2\bar{\rho}_f.
    \end{align}
    \item \textbf{Optimality condition}:
    \begin{align}
        u^*(s) = \arg\min_{u \in \mathcal{U}} \mathcal{H}(\rho(s), u, \lambda(s)).
    \end{align}
\end{itemize}

To analyze the structure of the optimal control, suppose the control Hamiltonian has the form \( H_u(s) = u(s) H_1 \), with a fixed Hermitian operator \( H_1 \) and scalar control \( u(s) \in [u_{\min}, u_{\max}] \). In this case, the switching function is given by
\begin{align}
\label{eqn:switching_fun}
    \mathcal{S}(s) := -i\, \Tr\left( \lambda(s) [H_1, \rho(s)] \right).
\end{align}
Then, the optimal control takes the bang-bang form:
\begin{align}
    u^*(s) =
    \begin{cases}
        u_{\max}, & \mathcal{S}(s) < 0, \\
        u_{\min}, & \mathcal{S}(s) > 0, \\
        \text{singular}, & \mathcal{S}(s) = 0.
    \end{cases}
\end{align}

\begin{remark}
The switching function \eqref{eqn:switching_fun} is always real valued. This follows from the fact that both the state \(\rho(s)\) and the costate \(\lambda(s)\) are Hermitian, and \(H_1\) is also Hermitian. Since the commutator \([H_1, \rho(s)]\) is anti-Hermitian, the product \(\lambda(s)[H_1, \rho(s)]\) is also anti-Hermitian. Hence, its trace is purely imaginary, and the prefactor \(-i\) ensures that \(\mathcal{S}(s)\in \mathbb{R}\).
\end{remark}

This switching rule is obtained by minimizing the Hamiltonian, which is linear in the control input, and yields an explicit characterization of the optimal control based on the sign of \( \Phi(s) \). The corresponding costate dynamics satisfy the adjoint equation:
\begin{align}
\label{eqn:adjoint equation}
    \dot{\lambda}(s) = -i[H_0 + u(s) H_1, \lambda(s)] + \mathcal{D}^\dagger[L]\lambda(s),
\end{align}
where \( \mathcal{D}^\dagger[L] \) denotes the adjoint of the Lindblad superoperator.

The resulting PMP-based formulation, together with the structure of the switching function, facilitates the synthesis of optimal control inputs over short prediction horizons. This control strategy is applied in numerical examples.

\section{Numerical Examples}
\label{sec:numerical}

To validate the proposed eigenstate-reduced SMPC strategy, we conduct numerical simulations on an open quantum system whose evolution is governed by the SME~\eqref{eqn:SME}. The objective is to steer the system state toward a target eigenstate \( \bar{\rho}_f \) by applying short-horizon controls derived via PMP.



\subsection{Stochastic Dynamics via Kraus Representation}

To numerically simulate the conditional evolution driven by~\eqref{eqn:dy}, we adopt the Kraus operator formalism following~\cite{rouchon2014models}. In each time interval \( [t, t + dt] \), the measurement increment \( dY_t \) is sampled according to~\eqref{eqn:dy}, and the quantum state evolves as
\begin{align}
\label{eqn: kraus}
\rho_{t+dt} = \frac{ M_{dY_t}\rho_t M_{dY_t}^\dagger + (1 - \eta)L \rho_t L^\dagger \,dt }{ \text{Tr} \left( M_{dY_t}\rho_t M_{dY_t}^\dagger + (1 - \eta)L\rho_t L^\dagger \,dt \right)},
\end{align}
where \( M_{dY_t} := I - (iH + \frac{1}{2}L^\dagger L)dt + \sqrt{\eta}\, dY_t L \).

\subsection{Control Design and Execution}
To determine the optimal control input over a finite prediction horizon \( [t, t+\Delta t] \), we solve the deterministic evolution
\begin{align}
\label{eq:uncond-Lindblad-expect-sim}
\dot{\rho}(s) = -i[H_0 + u(s) H_1, \rho(s)] + \mathcal{D}[L]\rho(s),
\end{align}
starting from the current state \( \rho(t) \). The objective is to minimize the terminal cost \( J = 1 - \Tr[\rho(t+\Delta t)\bar{\rho}_f] \), and the control function \( u(s) \) is optimized accordingly.

We adopt a gradient descent approach to update the control input over the prediction window. The costate \( \lambda(s) \), associated with the PMP, evolves backward in time as \eqref{eqn:adjoint equation} with the terminal condition \( \lambda(t+\Delta t) = -\bar{\rho}_f \). The control update is guided by the gradient of the cost function with respect to the control, approximated by the switching function \eqref{eqn:switching_fun}.

Following~\cite{lin2020time}, this gradient-based method is capable of capturing both bang-bang and singular control segments, even in the presence of switching functions \( \phi(s) \) that vanish over finite time intervals. This enables flexible and efficient exploration of optimal protocols without requiring full knowledge of analytic solutions or prior assumptions on control structure.

The resulting control trajectory \( u(s) \) is then applied to the stochastic evolution~\eqref{eqn: kraus} over the interval \( [t, t+\Delta t] \). The entire process is repeated iteratively at each decision point using the updated state and measurement feedback.

\subsection{Simulation Results}
\label{subsec:numerical}
To evaluate the stochastic performance of the proposed control strategy, we simulate \(1000\) trajectories of the stochastic master equation~\eqref{eqn: kraus}. Each trajectory evolves over a total horizon of \( T = 20 \) with time step \( dt = 0.01 \), resulting in \( N = 2000 \) steps. At each decision point \( t_k = k \Delta t \), with \( \Delta t = 0.5 \), the control input is optimized over the finite horizon \( [t_k, t_{k+1}) \), which consists of \( H_{\mathrm{step}} = \Delta t / dt = 50 \) steps. The detection efficiency is set to \( \eta = 1 \).

The system begins from a mixed initial state 
\[
\rho_0 = \mathrm{diag}(0.3,\, 0.4,\, 0.3),
\]
and is driven by a control Hamiltonian \( H_1 = J_y \), where
\[
J_y = \frac{1}{\sqrt{2}}\begin{bmatrix}
0 & -i & 0 \\
i & 0 & -i \\
0 & i & 0
\end{bmatrix}.
\]
The free Hamiltonian is given by \( H_0 =J_z = \mathrm{diag}(1,\, 0,\,-1) \), and the dissipation operator is defined as \( L = J_z \), modeling dephasing noise in the eigenbasis of \( H_0 \). The control input \( u_t \in \mathbb{R}\) is constrained to lie within the interval \( [-5,\,5] \) throughout the evolution.

During each horizon, the control function \( u(t) \) is optimized using gradient descent based on the deterministic dynamics~\eqref{eq:uncond-Lindblad-expect-sim}, which simulates the expected evolution. The costate dynamics are used to compute the gradient of the fidelity-based cost with respect to \( u(t) \), and the optimized control is then applied sequentially within the stochastic dynamics~\eqref{eqn: kraus} over \( H_{\mathrm{step}} \) steps. This process is repeated until the final time is reached.

Figure \ref{fig: three_avg} presents the averaged fidelity \( \mathbb{E}[\Tr(\rho(t)\bar{\rho}_f)] \), computed across 1000 sample paths. The blue line represents the SMPC-controlled evolution, while the red dashed line corresponds to the uncontrolled case with \( u(t) = 0 \). The results confirm that the SMPC strategy significantly improves convergence toward the target state \( \bar{\rho}_f \), demonstrating the effectiveness of the control framework under measurement backaction and decoherence.

\begin{figure}[htbp]
\centering
\includegraphics[width=0.98\columnwidth]{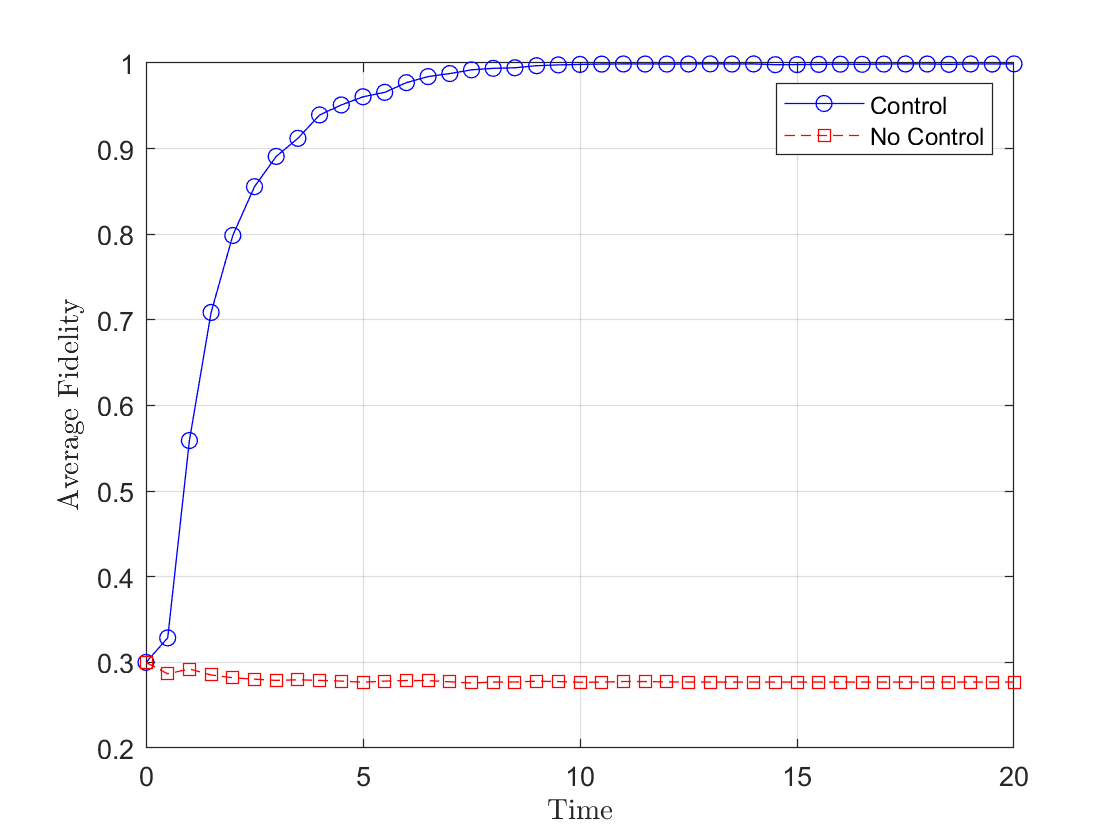}
\caption{Expected fidelity \( \mathbb{E}[\Tr(\rho(t)\bar{\rho}_f)] \) over time. The SMPC strategy (blue) shows significantly improved convergence toward the target state compared to the uncontrolled case (red).}
\label{fig: three_avg}
\end{figure}

\subsection{Comparison}
\label{sec:comparison}

We compare the proposed formulation with two references on the three-level system: the standard SMPC scheme described in the Appendix, and the exponential Lyapunov feedback of~\cite{liang2019exponential}.

The standard SMPC computes the expected cost by scenario sampling within each horizon, which increases online computation. The proposed method evaluates the same expected cost in a closed form by Theorem~\ref{thm:SMPC_equivalence}, so no sampling is required. Figure~\ref{fig:comparison} shows that the average fidelity has similar convergence, while the proposed controller requires less online computation. We also report the Lyapunov feedback of~\cite{liang2019exponential}. It enforces a Lyapunov decrease condition that guarantees exponential convergence and is included as a literature baseline; its design criterion differs from the horizon-based SMPC cost.

\begin{figure}[htbp]
\centering
\includegraphics[width=0.98\columnwidth]{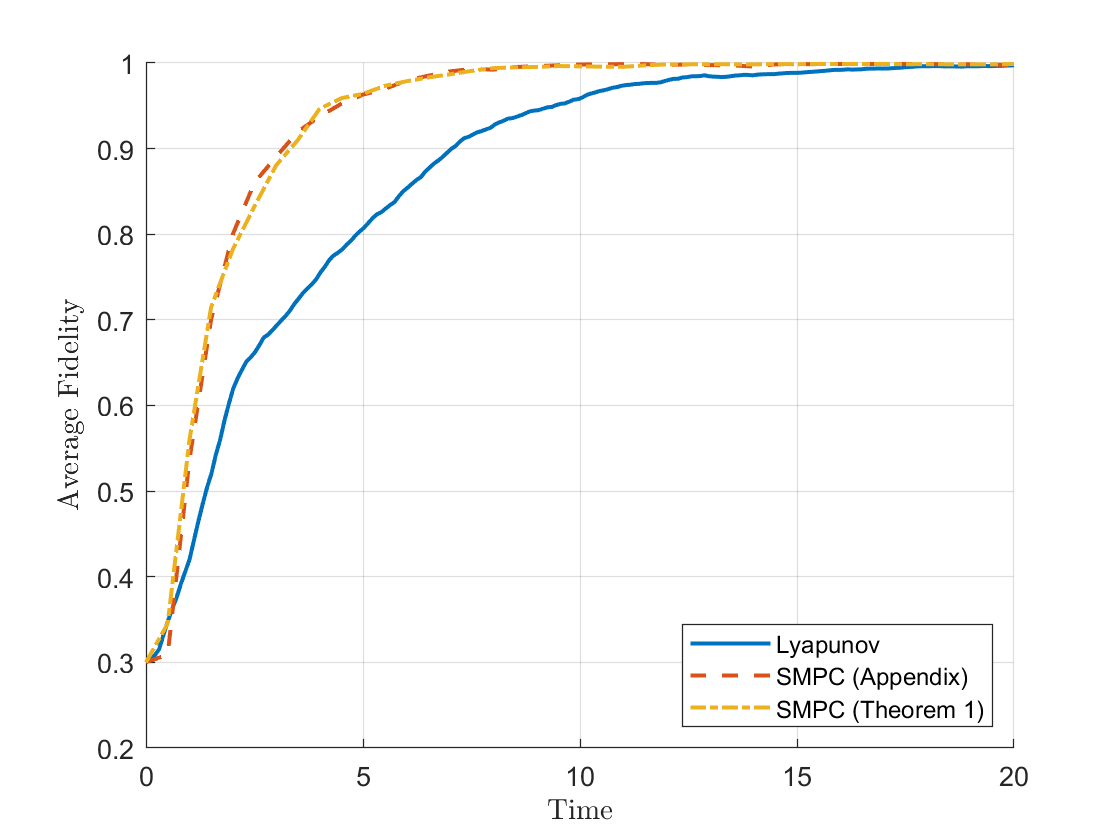}
\caption{Three-level system, average fidelity under Lyapunov feedback~\cite{liang2019exponential}, standard SMPC, and the proposed controller based on Theorem~\ref{thm:SMPC_equivalence}. The proposed method matches the standard SMPC and avoids scenario sampling.}
\label{fig:comparison}
\end{figure}

\subsection{Scalability with angular momentum $j$}
\label{sec:multilevel}

We repeat the procedure of Sec.~\ref{subsec:numerical} for angular-momentum systems with $j=1$ through $j=5$. 
Modeling and discretization follow Sec.~\ref{subsec:numerical} and \eqref{eq:uncond-Lindblad-expect-sim} together with the stochastic update~\eqref{eqn: kraus}. 
The initial state is $\rho_0 = \mathbb{I}_d/d$, where $\mathbb{I}_d$ denotes the $d\times d$ identity matrix. 
The target is the lowest $J_z$ eigenstate $\bar{\rho}_f = \ket{m=-j}\!\bra{m=-j}$, and the control bounds are unchanged.

The expected fidelity $\mathbb{E}[\Tr(\rho(t)\bar{\rho}_f)]$ is estimated from $1000$ trajectories for each $j$. 
Table~\ref{tab:fidelity_j} summarizes the results at the final time $T=150$, showing that the proposed controller consistently achieves high fidelity as the system dimension increases from $2j+1=3$ up to $11$.

As an illustration, Fig.~\ref{fig:j5} presents the trajectory of the averaged fidelity for the case $j=5$ (11 levels). 
This corresponds to the permutation-symmetric subspace of a 10-qubit system embedded in the $2^{10}$-dimensional Hilbert space. 
The figure demonstrates sustained convergence under continuous measurement in a relatively high-dimensional setting, confirming the scalability of the proposed approach.

\begin{figure}[htbp]
\centering
\includegraphics[width=0.98\columnwidth]{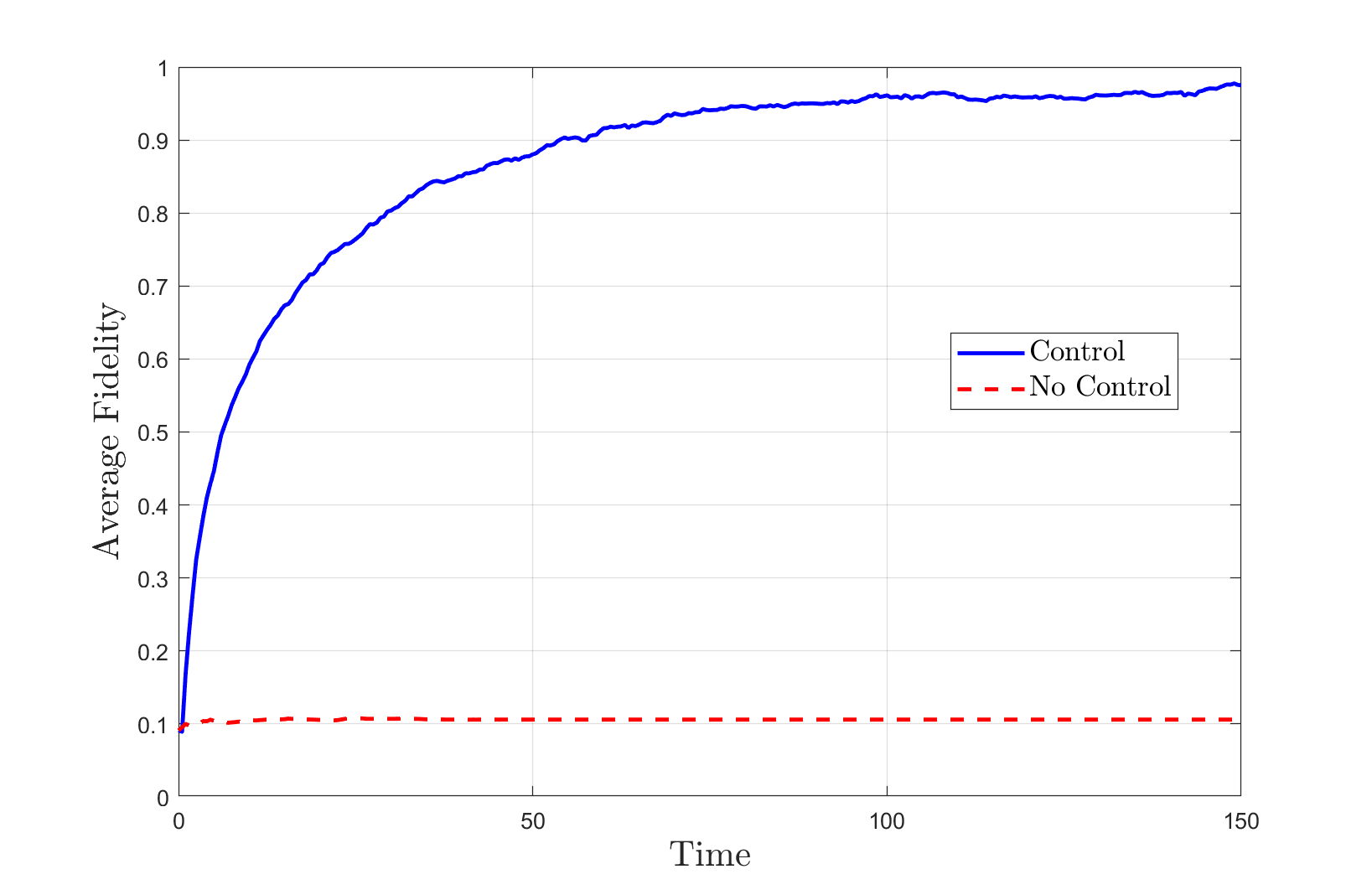}
\caption{Illustrative numerical example for $j=5$ (11 levels), corresponding to the permutation-symmetric subspace of a 10-qubit system. 
The process starts at $\rho_0=\mathbb{I}_d/d$; the averaged fidelity $\mathbb{E}[\Tr(\rho(t)\bar{\rho}_f)]$ over $1000$ trajectories shows sustained improvement under the proposed controller.}
\label{fig:j5}
\end{figure}

\begin{table}[htbp]
\centering
\caption{Expected fidelity $\mathbb{E}[\Tr(\rho(T)\bar{\rho}_f)]$ at $T=150$ for angular momentum indices $j=1,\dots,5$ (averaged over $1000$ trajectories).}
\label{tab:fidelity_j}
\begin{tabular}{c c}
\hline
Angular momentum $j$ & Fidelity at $T=150$ \\
\hline
1 & 0.9997 \\
2 & 0.9991 \\
3 & 0.9980 \\
4 & 0.9941 \\
5 & 0.9749 \\
\hline
\end{tabular}
\end{table}

\subsection{Ising Model (8-qubit system)}
\label{subsec:ising-model}

Previously, we focused on the angular momentum system and provided a comparison with the recent result in \cite{liang2019exponential}. 
Moreover, we demonstrated the scalability of our approach for angular momentum systems. 
Here, we consider a more general case based on Lemma~\ref{lem:inv-subspace}. 
This lemma can be extended to any system in which the target state is a common eigenstate of either $H_0$ or $L$. 
To illustrate this, we consider the following Ising-type model:
\begin{align}
    H_0 \;=\; 
    \sum_{(i,j)\in\mathcal{E}} J_{ij}\, Z_i Z_j 
    \;+\;
    \sum_{i=1}^n h_i Z_i,
    \label{eq:H0-Ising}
\end{align}
where $\mathcal{E}$ denotes the set of connected qubit pairs, $J_{ij}$ represents the Ising coupling strength, and $h_i$ is the local field on qubit $i$.

We then choose the product operators
\begin{equation}
  H_u \;=\; u\,\Big(\textstyle\bigotimes_{i=1}^{n} Y_i\Big), 
  \qquad 
  L \;=\; \textstyle\bigotimes_{i=1}^{n} Z_i ,
\end{equation}
so that $[H_0,\Pi_{\mathrm{tar}}]=0$ and $[L,\Pi_{\mathrm{tar}}]=0$, 
where $\Pi_{\mathrm{tar}}=\ket{00\cdots 0}\!\bra{00\cdots 0}$.
The control objective is to steer the system from the initial state $\ket{11\cdots 1}$ to the target eigenstate $\ket{00\cdots 0}$, i.e., 
$\rho_f=\ket{00\cdots 0}\!\bra{00\cdots 0}$.
Following our SMPC formulation, the design cost is selected as in~\eqref{eq:final-SMPC-cost} 
and evaluated under~\eqref{eq:uncond-Lindblad-expect} for the one-step prediction.

For execution, we simulate the stochastic dynamics~\eqref{eqn: kraus} for an eight-qubit system ($n=8$). 
The sampling period is set to $T_s=0.05$, and each sampling interval is numerically integrated with a substep size of $dt=0.0025$. 
The averaged fidelity is computed over $1000$ stochastic trajectories. 
Figure~\ref{fig:IsingAvg} shows that the averaged fidelity increases monotonically, confirming that the population concentrates within the invariant subspace associated with the target projector. 
This numerical result demonstrates that Theorem~\ref{thm:SMPC_equivalence} remains feasible under the relaxed condition established by Lemma~\ref{lem:inv-subspace}.

\begin{figure}[t]
  \centering
  \includegraphics[width=0.99\columnwidth]{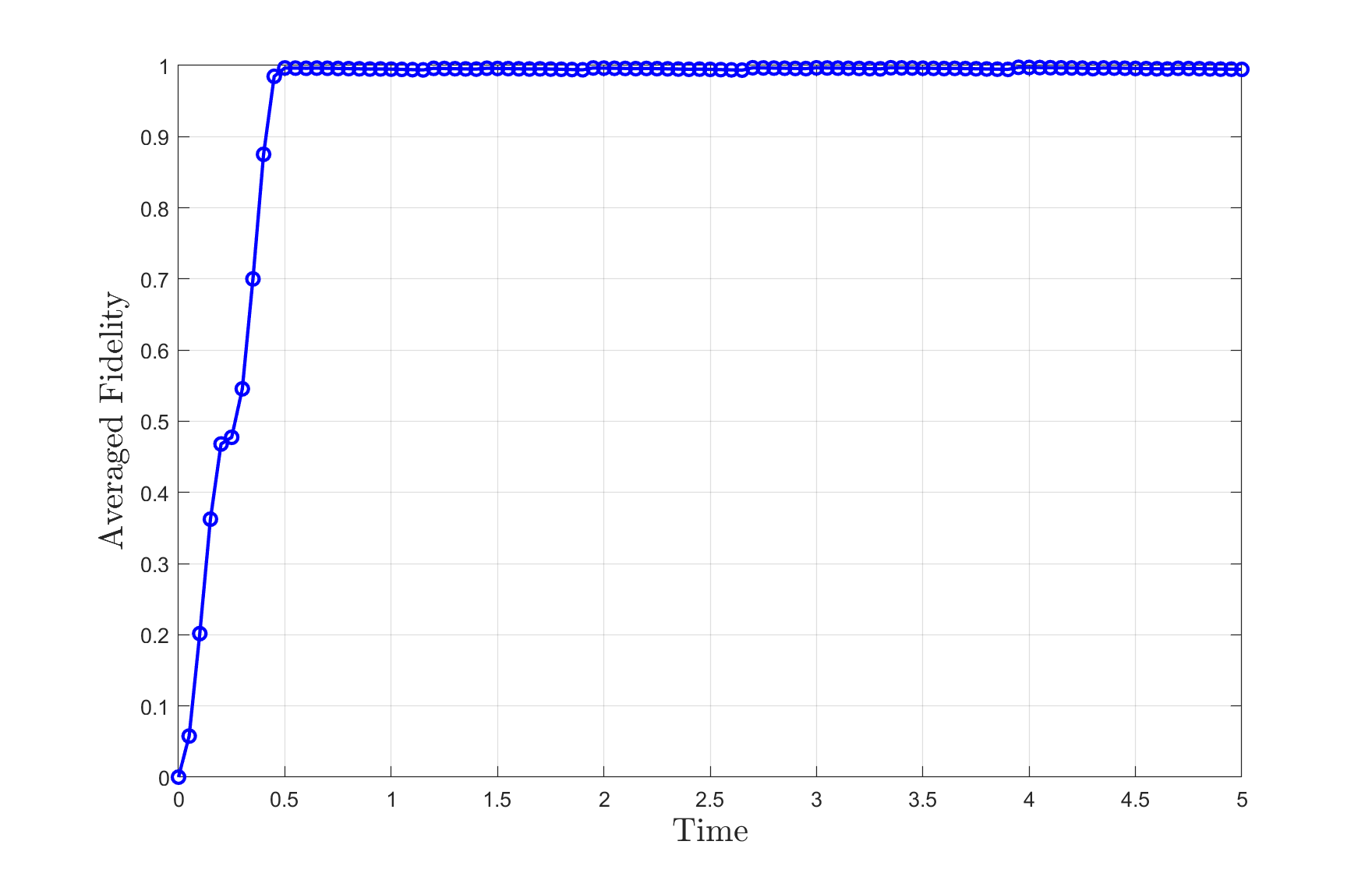}
  \caption{Averaged fidelity $\mathbb{E}[\Tr(\rho_t\rho_f)]$ versus time under control with $H_u=\bigotimes_i Y_i$ and $L=\bigotimes_i Z_i$. 
  The convergence to the target eigenstate verifies Theorem~\ref{thm:SMPC_equivalence} under the condition of Lemma~\ref{lem:inv-subspace}.}
  \label{fig:IsingAvg}
\end{figure}

\section{CONCLUSIONS AND FUTURE WORK}
In this work, we proposed a tractable formulation of infinite-horizon SMPC for quantum filtering systems by using the quantum state reduction property. This allows us to reformulate the original long-term cost into a short-horizon fidelity-based optimization problem. We further derived optimality conditions using PMP and identified a switching function structure for efficient control synthesis. The resulting approach avoids the complexity of Monte Carlo sampling and enables scalable implementation for multi-level quantum systems.

\appendix{}
\section{Discrete-Time Representation and SMPC Formulation}
As described in \cite{yang2013exploring}, we can represent the controlled quantum filtering equation \eqref{eqn:SME} in a real space. First, we define $n \times n$ matrices $X_1, \ldots, X_m$, where $m := n^2 - 1$ corresponds to the $n$-angular momentum system. These matrices satisfy the conditions: (i) $\text{tr}(X_l) = 0$, (ii) $\text{tr}(X_l X_j) = 2\delta_{lj}$, and (iii) $X_j^\dagger = X_j$. The anticommutator and commutator relations for these matrices are defined as follows:
\begin{equation}
\label{eqn: anticommutator}
\{X_l, X_j\} := X_l X_j + X_j X_l = \frac{4\delta_{lj}}{n} I_n + 2 \sum_{k=1}^{m} g_{ljk} X_k,
\end{equation}
\begin{equation}
\label{eqn: commutator}
-i[X_l, X_j] := -i(X_l X_j - X_j X_l) = 2 \sum_{k=1}^{m} f_{ljk} X_k,
\end{equation}
where $g_{ljk} := \frac{1}{4} \text{tr}(\{X_l, X_j\} X_k)$ and $f_{ljk} := \frac{1}{4i} \text{tr}([X_l, X_j] X_k)$. For instance, in a two-level quantum system, the Pauli matrices satisfy conditions \eqref{eqn: anticommutator} and \eqref{eqn: commutator}. In a three-level system, the Gell-Mann matrices satisfy these equations \cite{Kimura2003The}.

We can use these matrices to calculate real values from the density matrix as $x_l = \text{tr}(X_l \rho)$. These values define the coherent vector $x$:
\begin{equation}
\label{eqn: def coherent vector}
\mathbf{x} := (x_1, x_2, \ldots, x_m)^T \in B(\mathbb{R}^m) \subset \mathbb{R}^m,
\end{equation}
where $B(\mathbb{R}^m)$ is the coherent-vector space defined in $\mathbb{R}^m$. Taking the Euclidean norm, we find that $x$ lies within a ball of radius $\sqrt{\frac{2(n - 1)}{n}}$. The density operator can be written in the form:
\begin{equation}
\rho = \frac{I}{n} + \frac{1}{2} \sum_{l=1}^{m} x_l X_l.
\end{equation}
This formulation allows us to transform the quantum filtering equation into:
\begin{align}
\label{eqn: Bloch SME}
    d \mathbf{x} = \left(f(\mathbf{x}) + u(t) f_u(\mathbf{x})\right) dt + g(\mathbf{x}) dW, \quad x(0) = \mathbf{x}_0,
\end{align}
where the drift term $f(\mathbf{x}) = (\mathcal{L}_{H_0} + \mathcal{L}_D)\mathbf{x} + f_0$ and the control term $f_u (\mathbf{x}) = \mathcal{L}_{\mu}\mathbf{x}$ are $m \times 1$ column vectors. The stochastic term is given by $g(\mathbf{x}) = L_W(\mathbf{x})\mathbf{x} + L_{W_0}$. The superoperators $\mathcal{L}_{H_0}$, $\mathcal{L}_{\mu}$, and $\mathcal{L}_D$ are $m \times m$ matrices with elements given by:
\begin{small}
    \begin{align}
\begin{aligned}
     (\mathcal{L}_D)_{lr} =& -\delta_{lr} \frac{n}{2} \text{tr} \left( \sum_{j=1}^m L_j^\dagger L_j - \frac{1}{2} \sum_{j=1}^m \sum_{k=1}^{m} g_{lrk} \text{tr}(X_k L_j^\dagger L_j) \right) \\
&+ \frac{1}{2} \sum_{j=1}^{m} \text{tr}(X_l L_j X_r L_j^\dagger),
\end{aligned}
\end{align}
\end{small}
\begin{small}
\begin{align}
\label{eqn: LH0 LHmu}
    (\mathcal{L}_{H_0})_{lr} = -\sum_{k=1}^{m} f_{lrk} \text{tr}(X_k H_0), \quad (\mathcal{L}_\mu)_{lr} = -\sum_{k=1}^{m} f_{lrk} \text{tr}(X_k \mu),
\end{align}
\end{small}
and $(f_{0})_i=\frac{1}{n} \text{tr} \left(\sum_{j=1}^{m} [L_j, L_j^\dagger] X_i \right)$ is an $m \times 1$ column vector, where $i$ is the $i$-th element of the column vector $f_0$.
Note that \cite{yang2013exploring} contains a typo in \eqref{eqn: LH0 LHmu}, which can be verified in Section III of \cite{yang2013exploring}. Moreover, for the stochastic term $g(\mathbf{x})$, we calculate:
\begin{align}
\begin{aligned}
    (L_W(x))_{lr} = &-\frac{\delta_{lr}}{2} \sum_k \text{tr}(X_k (L + L^\dagger))x_k \\
    &+ \frac{1}{2} \sum_k g_{lrk} \text{tr}(X_k(L + L^\dagger)) \\
    &- \frac{i}{2} \sum_k f_{lrk} \text{tr}(X_k(L - L^\dagger))
\end{aligned}
\end{align}
and 
$(L_{W_0})_i = \frac{1}{n} \text{tr} ((L + L^\dagger)X_i)$ is an $m \times 1$ column vector, where $i$ is the $i$-th element of the column vector $L_{W_0}$.
For the stochastic part, we can define:
\begin{align}
    C_1 \triangleq 2\Re(\Gamma), \quad C_2 \triangleq 2\Im(\Gamma),
\end{align}
where $L =\Gamma_1 X_1+\Gamma_2 X_2+\ldots+\Gamma_m X_m$ and 
\begin{align}
\label{eqn: def Gamma}
    \Gamma:=(\Gamma_1,\Gamma_2,\ldots,\Gamma_m) \in \mathbb{C}^m .
\end{align}
Thus, we obtain:
\begin{align}
    (L_W(\mathbf{x}))_{lr} = -\delta_{lr} C_1 \mathbf{x} + \sum_k g_{lrk} (C_{1})_k + \sum_k f_{lrk} (C_{2})_k
\end{align}
and 
\begin{align}
    (L_{W_0}) = \frac{2}{n} C_1^T.
\end{align}

In the end, we model a discrete-time evolution corresponding to \eqref{eqn: kraus} and represent it within a coherent vector for \eqref{eqn: Bloch kraus} as 
\begin{align}
\label{eqn: Bloch kraus}
\begin{aligned}
     \mathbf{x}_{t+\delta t} = \mathcal{T} (\mathbf{x}_{t}, u_t,\delta W)
\end{aligned}
\end{align}
where $\delta W_t$ follows a Gaussian distribution, that is, $\delta W_t \sim N (0,\delta t) $ and $\mathcal{N}$ is the normalize factor. 
Note that, equation \eqref{eqn: Bloch kraus} is the same as \eqref{eqn: Bloch SME} up to order $\delta t$. To have a higher-order correction, either the Euler-Milstein method as in \cite{rouchon2014models,rouchon2015efficient}, or following the approach of \cite{guevara2020completely} that guarantees complete positivity of the conditional dynamics to order \((\delta t)^2\).

Therefore, in the following, we utilize \eqref{eqn: Bloch kraus} as the candidate equation to describe the discrete time quantum filtering. Also, we notice that $\delta t$ is still small to approximate the controlled quantum filtering equation \eqref{eqn:SME}. Thus, in the practical simulation, we assume a real discrete-time evolution such that we take $n$-steps for the small time $\delta t$ to be one-step. We can assume that the one-step time scale is $\Delta t = n \delta t$. Then, the one-step discrete-time evolution is
\begin{align}
\begin{aligned}
      \mathbf{x}_{t+\Delta t} &=\underbrace{\mathcal{T}(\mathcal{T}(\cdots\mathcal{T}}_{n \text{ times}}(\mathbf{x}_t, u_t, \delta W)\cdots), \delta W), \delta W)\\
      &=:\mathcal{U}_t (\mathbf{x}_t,u_t,\delta W),
\end{aligned}
\end{align}
where \( \mathcal{U}_t \) represents the effective one-step discrete-time transformation resulting from \(n\) applications of \(\mathcal{T}\) over the interval $\Delta t$.

Using this representation, we formulate the following SMPC problem:
\begin{align}
\label{eqn: SMPC}
    &V_N = \min_{\{u_i\}_{i=0}^{N-1}} J_N = \mathbb{E}\left\{\sum_{i=0}^{N-1} \ell (X_{i|k})\right\}, \nonumber \\
    &\text{s.t.} \quad
    \begin{cases}
        X_{0|k} = X_k, & \text{(initial condition)} \\
        X_{i+1|k} = \mathcal{U}_t(X_{i|k}, u_i, \delta W_i), & \text{(stochastic dynamics)} \\
        |u_{i|k}| \leq B, & \text{(control constraints)}
    \end{cases}
\end{align}
where \(J_N\) denotes the expected cost function corresponding to minimizing the deviation from the desired final state \(X_f\), with \(\ell(X) = \|X - X_f\|^2\). The function \(\mathcal{U}\) represents the stochastic system dynamics influenced by the control action \(u_i\) and the Wiener process \(\delta W_i\). 
In this context, we assume a single input signal for the quantum system which satisfies a control bound defined by \(B \in \mathbb{R}^+\).

To prove a stability for SMPC, we introduce the following assumption for the optimal value function \(V_N\):
\begin{assumption}
\label{ass: stochastic_Lyapunov}
There exists a continuous and positive definite function \(\sigma: \mathbb{R}^3 \to \mathbb{R}_{\geq 0}\) and a constant \(\overline{a} \in \mathbb{R}^+\) such that the value function \(V_N(X)\) satisfies:
\begin{align}
    V_N(X) \leq \overline{a} \, \sigma(X),
\end{align}
and the cost function \(\ell(X)\) is bounded below by:
\begin{align}
    \ell(X) \geq \sigma(X)
\end{align}
for all control inputs \(u\) satisfying the control constraints.
\end{assumption}

\begin{proposition}
\label{prop: stochastic_Lyapunov}
Suppose Assumption~\ref{ass: stochastic_Lyapunov} holds. Then, the system under the SMPC feedback law \eqref{eqn: SMPC} satisfies all constraints for all $k \in \mathbb{N}_{\geq 0}$, and
\begin{align}
    \mathbb{E} \{V_N(X_{k+1})\} - V_N(X_k) \leq -\ell (X_k) + \frac{\bar{a}^2}{N-1} \sigma(X_k).
\end{align}
Furthermore, there exists $N \geq \tilde{N}$ such that
\begin{align}
\label{eqn: exponentially_stable}
    \mathbb{E} \left\{V_N(X_{k+1})\right\} \leq (1-\beta) V_{N}(X_k),
\end{align}
for some $\beta \in (0, 1)$ in \( \mathbb{R} \).
\end{proposition}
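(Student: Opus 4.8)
The plan is to treat the finite-horizon optimal value function $V_N$ as a stochastic Lyapunov function and run the standard dynamic-programming argument for MPC without a terminal cost, adapted to the expectation appearing in \eqref{eqn: SMPC}. First I would record the principle of optimality for the receding-horizon problem. Since the first stage cost $\ell(X_{0|k})=\ell(X_k)$ is control-independent, the Bellman recursion reads $V_N(X_k)=\ell(X_k)+\min_{u_0}\mathbb{E}[V_{N-1}(X_{k+1})]$, so that along the optimal first move $\mathbb{E}[V_{N-1}(X_{k+1})]=V_N(X_k)-\ell(X_k)$. Splitting $V_N=V_{N-1}+(V_N-V_{N-1})$ then yields the exact identity
\begin{align}
\mathbb{E}[V_N(X_{k+1})]-V_N(X_k)=-\ell(X_k)+\mathbb{E}\big[(V_N-V_{N-1})(X_{k+1})\big],
\end{align}
which reduces the drift bound in the proposition to controlling the increment $V_N-V_{N-1}$.

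Next I would extract a controllability estimate from Assumption~\ref{ass: stochastic_Lyapunov}. Because the stage cost dominates $\sigma$ and the value function is dominated by $\bar a\sigma$ (uniformly in the horizon), the sequence $\{V_n\}$ is nondecreasing in $n$ and sandwiched as $\sigma\le\ell\le V_n\le\bar a\sigma$, which also forces $\bar a\ge 1$. Writing $a_i:=\mathbb{E}[\ell(X_{i|k}^\ast)]$ along the optimal horizon-$N$ trajectory and using the principle of optimality on the tail together with $V_{N-j}\le\bar a\sigma\le\bar a\ell$ gives the key inequality $\sum_{i=j}^{N-1}a_i\le\bar a\,a_j$ for every $j$. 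Combined with the uniform bound and monotonicity, the summability of the increments $\sum_n (V_n-V_{n-1})\le(\bar a-1)\sigma$ forces the $N$-th increment to decay like $1/(N-1)$; careful bookkeeping of the constants produces $\mathbb{E}[(V_N-V_{N-1})(X_{k+1})]\le\frac{\bar a^2}{N-1}\,\sigma(X_k)$, which is exactly the second assertion.

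With the drift inequality in hand, the geometric-decrease claim follows by a short chain of estimates. Using $\ell(X_k)\ge\sigma(X_k)$ and then $\sigma(X_k)\ge V_N(X_k)/\bar a$ (from $V_N\le\bar a\sigma$), I obtain
\begin{align}
\mathbb{E}[V_N(X_{k+1})]\le\Big(1-\tfrac{1}{\bar a}\big(1-\tfrac{\bar a^2}{N-1}\big)\Big)V_N(X_k),
\end{align}
so choosing any $N>\tilde N:=\bar a^2+1$ makes the bracket a valid contraction factor $1-\beta$ with $\beta=\tfrac{1}{\bar a}(1-\tfrac{\bar a^2}{N-1})\in(0,1)$. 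Constraint satisfaction is the easy part: the only constraints in \eqref{eqn: SMPC} are the box bounds $|u_i|\le B$, so the shift-and-append construction (reuse the tail of the previous optimizer and append any admissible input) is feasible at every step, and the Kraus update keeps the coherent vector in its admissible ball. Finally, the drift inequality makes $V_N(X_k)$ a nonnegative supermartingale-type Lyapunov function; iterating gives $\mathbb{E}[V_N(X_k)]\le(1-\beta)^k\mathbb{E}[V_N(X_0)]\to 0$, and positivity of $\sigma$ yields convergence of the state to $X_f$.

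The main obstacle I anticipate is the middle step: pinning down the increment bound with the explicit constant $\bar a^2/(N-1)$. The decay $O(1/N)$ of the horizon-to-horizon gap is structurally forced by monotonicity plus the uniform bound, but matching the stated prefactor requires combining the controllability inequality $\sum_{i=j}^{N-1}a_i\le\bar a a_j$ with the monotone ordering of the increments in the right way, rather than a single telescoping estimate; everything before and after this step is routine.
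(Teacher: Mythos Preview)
Your decomposition via the Bellman identity is a genuinely different route from the paper's. The paper never tries to control the horizon increment $V_N-V_{N-1}$ at all: instead it bounds $\mathbb{E}[V_N(X_{k+1})]$ above by a shift-and-append candidate that reuses the optimal controls $u_{1|k}^*,\dots,u_{N-j|k}^*$ and then re-optimizes for the last $j$ steps, giving
\[
\mathbb{E}[V_N(X_{k+1})]-V_N(X_k)\le -\ell(X_k)+\bar a\,\mathbb{E}[\sigma(X_{N-j+1|k}^*)]
\]
for \emph{every} $j$. Since $\sum_{i=1}^{N-1}\mathbb{E}[\ell(X_{i|k}^*)]\le V_N(X_k)\le\bar a\,\sigma(X_k)$, a pigeonhole step yields some index $i'$ with $\mathbb{E}[\sigma(X_{i'|k}^*)]\le\mathbb{E}[\ell(X_{i'|k}^*)]\le\frac{\bar a}{N-1}\sigma(X_k)$; choosing $j=N-i'+1$ gives the stated $\bar a^2/(N-1)$ constant directly. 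The freedom to pick the split point $i'$ is what produces the $1/(N-1)$ rate and the exact prefactor simultaneously.

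Your approach forfeits this freedom: after one Bellman step you are committed to bounding $(V_N-V_{N-1})(X_{k+1})$, i.e., the increment at a \emph{fixed} state, and neither of your proposed tools quite closes this. Boundedness plus monotonicity of $n\mapsto V_n$ only gives $\sum_n(V_n-V_{n-1})\le(\bar a-1)\sigma$, which does not force $V_N-V_{N-1}\lesssim\sigma/N$ pointwise (the increments need not be ordered). Your controllability inequality $\sum_{i\ge j}a_i\le\bar a\,a_j$ relies on the tail of the horizon-$N$ optimizer being optimal for the $(N-j)$-step subproblem; in the open-loop stochastic formulation \eqref{eqn: SMPC} the tail is only feasible, so you get $\sum_{i\ge j}a_i\ge\mathbb{E}[V_{N-j}(X_{j|k}^*)]$, which is the wrong direction for your chain. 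In short, the obstacle you flagged is real and is precisely what the paper's pigeonhole argument sidesteps. Your contraction step and your identification of $\tilde N=\bar a^2+1$ match the paper exactly.
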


The proof follows via similiar reasoning as in~\cite{Lorenzen2019Stochastic}. Once the system reaches the target state $X_f$, the dynamics satisfy $\mathcal{U}(X_f, 0, \delta W_i) = X_f$, implying stochastic invariance and exponential convergence. 

\begin{proof}
The expected difference in the optimal value function from time \(k\) to \(k + 1\) is given by:
\begin{align}
    \label{eqn: main_proof_equation}
    \begin{aligned}
        &\mathbb{E} \{V_N(X_{k+1})\} - V_N(X_k) \\
        &\leq \sum_{i=1}^{N-j} \mathbb{E} \{\ell(X_{i|k}^{*})\} + \mathbb{E} \{V_j(X_{N-j+1|k}^{*})\} \\
        &\quad - \sum_{i=0}^{N-1} \mathbb{E} \{\ell(X_{i|k}^{*})\} \\
        &= \sum_{i=1}^{N-j} \mathbb{E} \{\ell(X_{i|k}^{*})\} + \mathbb{E} \{V_j(X_{N-j+1|k}^{*})\} \\
        &\quad - \sum_{i=0}^{N-j} \mathbb{E} \{\ell(X_{i|k}^{*})\} - \sum_{i=N-j+1}^{N-1} \mathbb{E} \{\ell(X_{i|k}^{*})\} \\
        &\leq -\ell(X_k) + \mathbb{E} \{V_j(X_{N-j+1|k}^{*})\} \\
        &\leq -\ell(X_k) + \overline{a} \mathbb{E} \{\sigma(X_{N-j+1|k}^{*})\}.
    \end{aligned}
\end{align}
The first inequality in \eqref{eqn: main_proof_equation} holds for any \(j \leq N\) and is derived by using the feasible, though potentially suboptimal, control sequence \(u_{i|k+1} = u_{i+1|k}\) for \(i \leq N - j\). The last inequality follows from Assumption~\ref{ass: stochastic_Lyapunov}. 

By Assumption~\ref{ass: stochastic_Lyapunov}, the upper limit on \(V_N\) gives:
\begin{align}
    \sum_{i=1}^{N-1} \mathbb{E} \{\ell(X_{i|k}^{*})\} \leq \overline{a} \sigma(X_k),
\end{align}
which implies that there exists some \(i' \in \{1, \ldots, N-1\}\) such that:
\begin{align}
    \mathbb{E} \{\sigma(X_{i'|k}^{*})\} \leq \mathbb{E} \{\ell(X_{i'|k}^{*})\} \leq \frac{\overline{a}}{N-1} \sigma(X_k).
\end{align}
By choosing \(j = N - i' + 1\) and substituting into \eqref{eqn: main_proof_equation}, we obtain:
\begin{align}
    \mathbb{E} \{V_N(X_{k+1})\} - V_N(X_k) \leq -\ell(X_k) + \frac{\overline{a}^2}{N-1} \sigma(X_k).
\end{align}

For the second part of the proof, we consider the condition \(N \geq \overline{a}^2 + 1\). Under this condition, the inequality becomes:
\begin{align}
    \begin{aligned}
        -\ell(X_k) + \frac{\overline{a}^2}{N-1} \sigma(X_k) 
        &\leq \left(\frac{\overline{a}^2}{N-1} - 1\right) \sigma(X_k) \\
        &\leq \left(\frac{\overline{a}}{N-1} - \frac{1}{\overline{a}}\right) V_N(X_k).
    \end{aligned}
\end{align}
Thus, we have:
\begin{align}
    \begin{aligned}
        \mathbb{E} \{V_N(X_{k+1})\} 
        &\leq \frac{\overline{a}^2 + (\overline{a} - 1)(N-1)}{\overline{a}(N-1)} V_N(X_k) \\
        &:= (1 - \beta) V_N(X_k),
    \end{aligned}
\end{align}
where \(\beta = \frac{N - (\overline{a}^2 + 1)}{\overline{a} (N-1)} < 1\) for \(N > \overline{a}^2 + 1\). This concludes the proof.
\end{proof}

\bibliographystyle{IEEEtran}
\bibliography{reference}

@article{dong2010quantum,
  title={Quantum control theory and applications: A survey},
  author={Dong, Daoyi and Petersen, Ian R},
  journal={IET Control Theory \& Applications},
  volume={4},
  number={12},
  pages={2651--2671},
  year={2010},
  publisher={IET}
}

@book{wiseman2010quantum,
  title={Quantum measurement and control},
  author={Wiseman, Howard M and Milburn, Gerard J},
  year={2010},
  publisher={Cambridge University Press},
  address={Cambridge, England}
}

@ARTICLE{James2008H,
  author={James, Matthew R. and Nurdin, Hendra I. and Petersen, Ian R.},
  journal={IEEE Transactions on Automatic Control}, 
  title={{H}$^{\infty}$ Control of Linear Quantum Stochastic Systems}, 
  year={2008},
  volume={53},
  number={8},
  pages={1787-1803},
  doi={10.1109/TAC.2008.929378}}

@book{dong2023learning,
  title={Learning and Robust Control in Quantum Technology},
  author={Dong, Daoyi and Petersen, Ian R},
  year={2023},
  publisher={Springer Nature}
}

@article{hashimoto2017stability,
	  title     = {Stability of Stochastic Model Predictive Control for {S}chr{\"o}dinger Equation with Finite Approximation},
	  author    = {Tomoaki Hashimoto},
	    journal   = {International Journal of Electrical and Information Engineering},
	  volume    = {11},
	  number    = {1},
	  year      = {2017},
	  pages     = {12 - 17},
	  issn  	= {eISSN: 1307-6892},
	  publisher = {World Academy of Science, Engineering and Technology},
	  index 	= {Open Science Index 121, 2017},
	}

@article{goldschmidt2022model,
  title={Model predictive control for robust quantum state preparation},
  author={Goldschmidt, Andy J and DuBois, Jonathan L and Brunton, Steven L and Kutz, J Nathan},
  journal={Quantum},
  volume={6},
  pages={837},
  year={2022},
  publisher={Verein zur F{\"o}rderung des Open Access Publizierens in den Quantenwissenschaften}
}

@article{clouatre2022model,
  title={Model-predictive quantum control via {Hamiltonian} learning},
  author={Clouatre, MAISON and Khojasteh, Mohammad Javad and Win, MOE Z},
  journal={IEEE Transactions on Quantum Engineering},
  volume={3},
  pages={4100623},
  year={2022},
  publisher={IEEE}
}

@book{grune2017nonlinear,
  title={Nonlinear model predictive control},
  author={Gr{\"u}ne, Lars and Pannek, J{\"u}rgen and Gr{\"u}ne, Lars and Pannek, J{\"u}rgen},
  year={2017},
  publisher={Springer}
}

@ARTICLE{Cannon2011Stochastic,
  author={Cannon, Mark and Kouvaritakis, Basil and Raković, Saša V. and Cheng, Qifeng},
  journal={IEEE Transactions on Automatic Control}, 
  title={Stochastic Tubes in Model Predictive Control With Probabilistic Constraints}, 
  year={2011},
  volume={56},
  number={1},
  pages={194-200},
  doi={10.1109/TAC.2010.2086553}}

@ARTICLE{Lorenzen2017Constraint,
  author={Lorenzen, Matthias and Dabbene, Fabrizio and Tempo, Roberto and Allgöwer, Frank},
  journal={IEEE Transactions on Automatic Control}, 
  title={Constraint-Tightening and Stability in Stochastic Model Predictive Control}, 
  year={2017},
  volume={62},
  number={7},
  pages={3165-3177},
  doi={10.1109/TAC.2016.2625048}}

@article{yang2013exploring,
  title={Exploring the transition-probability-control landscape of open quantum systems: Application to a two-level case},
  author={Yang, Fei and Cong, Shuang and Long, Ruixing and Ho, Tak-San and Wu, Rebing and Rabitz, Herschel},
  journal={Phys. Rev. A},
  volume={88},
  number={3},
  pages={033420},
  year={2013},
  publisher={APS}
}

@article{lee2024robust,
  title={Robust Quantum Control via a Model Predictive Control Strategy},
  author={Lee, Yunyan and Petersen, Ian R and Dong, Daoyi},
  journal={quant-ph, arXiv:2402.07396},
  year={2024}
}

@article{schwenzer2021review,
  title={Review on model predictive control: An engineering perspective},
  author={Schwenzer, Max and Ay, Muzaffer and Bergs, Thomas and Abel, Dirk},
  journal={The International Journal of Advanced Manufacturing Technology},
  volume={117},
  number={5},
  pages={1327--1349},
  year={2021},
  publisher={Springer}
}

@inproceedings{liang2018exponential,
  title={On Exponential Stabilization of Spin $-\frac{1}{2}$ Systems},
  author={Liang, Weichao and Amini, Nina H and Mason, Paolo},
  booktitle={2018 IEEE Conference on Decision and Control (CDC)},
  pages={6602--6607},
  year={2018},
  organization={IEEE}
}

@article{rouchon2014models,
  title={Models and feedback stabilization of open quantum systems},
  author={Rouchon, Pierre},
  journal={arXiv preprint arXiv:1407.7810},
  year={2014}
}

@article{rouchon2015efficient,
  title={Efficient quantum filtering for quantum feedback control},
  author={Rouchon, Pierre and Ralph, Jason F},
  journal={Phys. Rev. A},
  volume={91},
  number={1},
  pages={012118},
  year={2015},
  publisher={APS}
}

@article{guevara2020completely,
  title={Completely positive quantum trajectories with applications to quantum state smoothing},
  author={Guevara, Ivonne and Wiseman, Howard M},
  journal={Phys. Rev. A},
  volume={102},
  number={5},
  pages={052217},
  year={2020},
  publisher={APS}
}

@article{Zhang2017Quantum,
title = {Quantum feedback: Theory, experiments, and applications},
journal = {Physics Reports},
volume = {679},
pages = {1-60},
year = {2017},
issn = {0370-1573},
doi = {https://doi.org/10.1016/j.physrep.2017.02.003},
author = {Jing Zhang and Yu-xi Liu and Re-Bing Wu and Kurt Jacobs and Franco Nori},
keywords = {Quantum control, Quantum feedback, Quantum optics, Cavity QED, Circuit QED, Optomechanics, Quantum nanoelectromechanics, Quantum information processing},
abstract = {The control of individual quantum systems is now a reality in a variety of physical settings. Feedback control is an important class of control methods because of its ability to reduce the effects of noise. In this review we give an introductory overview of the various ways in which feedback may be implemented in quantum systems, the theoretical methods that are currently used to treat it, the experiments in which it has been demonstrated to date, and its applications. In the last few years there has been rapid experimental progress in the ability to realize quantum measurement and control of mesoscopic systems. We expect that the next few years will see further rapid advances in the precision and sophistication of feedback control protocols realized in the laboratory.}
}

@article{Belavkin1992Quantum,
title = {Quantum stochastic calculus and quantum nonlinear filtering},
journal = {Journal of Multivariate Analysis},
volume = {42},
number = {2},
pages = {171-201},
year = {1992},
issn = {0047-259X},
doi = {https://doi.org/10.1016/0047-259X(92)90042-E},
author = {Viacheslav P Belavkin},
keywords = {stochastic calculus, quantum system, output process, quantum filtering, conditional expectation},
abstract = {A ∗-algebraic indefinite structure of quantum stochastic (QS) calculus is introduced and a continuity property of generalized nonadapted QS integrals is proved under the natural integrability conditions in an infinitely dimensional nuclear space. The class of nondemolition output QS processes in quantum open systems is characterized in terms of the QS calculus, and the problem of QS nonlinear filtering with respect to nondemolition, continuous measurements is investigated. The stochastic calculus of a posteriori conditional expectations in quantum observed systems is developed and a general quantum filtering stochastic equation for 2 QS process is derived. An application to the description of the spontaneous collapse of the quantum spin under continuous observation is given.}
}

@article{Bouten2007An,
author = {Bouten, Luc and Van Handel, Ramon and James, Matthew R.},
title = {An Introduction to Quantum Filtering},
journal = {SIAM Journal on Control and Optimization},
volume = {46},
number = {6},
pages = {2199-2241},
year = {2007},
doi = {10.1137/060651239},
eprint = {https://doi.org/10.1137/060651239},
abstract = { This paper provides an introduction to quantum filtering theory. An introduction to quantum probability theory is given, focusing on the spectral theorem and the conditional expectation as a least squares estimate, and culminating in the construction of Wiener and Poisson processes on the Fock space. We describe the quantum Itô calculus and its use in the modeling of physical systems. We use both reference probability and innovations methods to obtain quantum filtering equations for system-probe models from quantum optics. }
}

@article{Doherty1999Feedback,
  title = {Feedback control of quantum systems using continuous state estimation},
  author = {Doherty, A. C. and Jacobs, K.},
  journal = {Phys. Rev. A},
  volume = {60},
  issue = {4},
  pages = {2700--2711},
  numpages = {0},
  year = {1999},
  month = {Oct},
  publisher = {American Physical Society},
  doi = {10.1103/PhysRevA.60.2700},
}

@ARTICLE{Handel2005Feedback,
  author={van Handel, R. and Stockton, J.K. and Mabuchi, H.},
  journal={IEEE Transactions on Automatic Control}, 
  title={Feedback control of quantum state reduction}, 
  year={2005},
  volume={50},
  number={6},
  pages={768-780},
  keywords={Feedback control;Quantum mechanics;Mechanical systems;Mechanical variables measurement;Stochastic processes;Filtering;State feedback;Filters;Stochastic systems;Adaptive control;Lyapunov functions;quantum filtering;quantum mechanics;quantum probability;stochastic nonlinear control},
  doi={10.1109/TAC.2005.849193}}

@article{Wiseman1993Quantum,
  title = {Quantum theory of optical feedback via homodyne detection},
  author = {Wiseman, H. M. and Milburn, G. J.},
  journal = {Phys. Rev. Lett.},
  volume = {70},
  issue = {5},
  pages = {548--551},
  numpages = {0},
  year = {1993},
  month = {Feb},
  publisher = {American Physical Society},
  doi = {10.1103/PhysRevLett.70.548},
}

@article{Chase2008Efficient,
  title = {Efficient feedback controllers for continuous-time quantum error correction},
  author = {Chase, Bradley A. and Landahl, Andrew J. and Geremia, JM},
  journal = {Phys. Rev. A},
  volume = {77},
  issue = {3},
  pages = {032304},
  numpages = {7},
  year = {2008},
  month = {Mar},
  publisher = {American Physical Society},
  doi = {10.1103/PhysRevA.77.032304},
}

@article{Carvalho2008Controlling,
  title = {Controlling entanglement by direct quantum feedback},
  author = {Carvalho, A. R. R. and Reid, A. J. S. and Hope, J. J.},
  journal = {Phys. Rev. A},
  volume = {78},
  issue = {1},
  pages = {012334},
  numpages = {10},
  year = {2008},
  month = {Jul},
  publisher = {American Physical Society},
  doi = {10.1103/PhysRevA.78.012334},
}

@article{Szigeti2009Continuous,
  title = {Continuous measurement feedback control of a {B}ose-{E}instein condensate using phase-contrast imaging},
  author = {Szigeti, S. S. and Hush, M. R. and Carvalho, A. R. R. and Hope, J. J.},
  journal = {Phys. Rev. A},
  volume = {80},
  issue = {1},
  pages = {013614},
  numpages = {11},
  year = {2009},
  month = {Jul},
  publisher = {American Physical Society},
  doi = {10.1103/PhysRevA.80.013614},
}

@article{Nurdin2009Coherent,
title = {Coherent quantum {LQG} control},
journal = {Automatica},
volume = {45},
number = {8},
pages = {1837-1846},
year = {2009},
issn = {0005-1098},
doi = {https://doi.org/10.1016/j.automatica.2009.04.018},
author = {Hendra I. Nurdin and Matthew R. James and Ian R. Petersen},
keywords = {Quantum systems, Quantum control, Stochastic control, Linear quadratic regulators, Linear control systems},
abstract = {Based on a recently developed notion of physical realizability for quantum linear stochastic systems, we formulate a quantum LQG optimal control problem for quantum linear stochastic systems where the controller itself may also be a quantum system and the plant output signal can be fully quantum. Such a control scheme is often referred to in the quantum control literature as “coherent feedback control”. It distinguishes the present work from previous works on the quantum LQG problem where measurement is performed on the plant and the measurement signals are used as the input to a fully classical controller with no quantum degrees of freedom. The difference in our formulation is the presence of additional non-linear and linear constraints on the coefficients of the sought after controller, rendering the problem as a type of constrained controller design problem. Due to the presence of these constraints, our problem is inherently computationally hard and this also distinguishes it in an important way from the standard LQG problem. We propose a numerical procedure for solving this problem based on an alternating projections algorithm and, as an initial demonstration of the feasibility of this approach, we provide fully quantum controller design examples in which numerical solutions to the problem were successfully obtained. For comparison, we also consider the case of classical linear controllers that use direct or indirect measurements, and show that there exists a fully quantum linear controller which offers an improvement in performance over the classical ones.}
}

@INPROCEEDINGS{Maalouf2009Coherent,
  author={Maalouf, Aline I. and Petersen, Ian R.},
  booktitle={2009 American Control Conference}, 
  title={Coherent {H}$^{\infty}$ control for a class of linear complex quantum systems}, 
  year={2009},
  volume={},
  number={},
  pages={1472-1479},
  keywords={Riccati equations;Control systems;Differential equations;Stochastic systems;Optical feedback;Optical losses;Optical control;Robust control;Stochastic processes;Quantum mechanics;Quantum Feedback Control;H∞ control;dissipativity;complex strict bounded real lemma;quantum optics;complex Riccati equations;quantum controller realization},
  doi={10.1109/ACC.2009.5159845}}

@article{Lloyd2000Coherent,
  title = {Coherent quantum feedback},
  author = {Lloyd, Seth},
  journal = {Phys. Rev. A},
  volume = {62},
  issue = {2},
  pages = {022108},
  numpages = {12},
  year = {2000},
  month = {Jul},
  publisher = {American Physical Society},
  doi = {10.1103/PhysRevA.62.022108},
}

@article{Mabuchi2008Coherent,
  title = {Coherent-feedback quantum control with a dynamic compensator},
  author = {Mabuchi, Hideo},
  journal = {Phys. Rev. A},
  volume = {78},
  issue = {3},
  pages = {032323},
  numpages = {5},
  year = {2008},
  month = {Sep},
  publisher = {American Physical Society},
  doi = {10.1103/PhysRevA.78.032323},
}

@inproceedings{belavkin1987non,
  title={Non-demolition measurement and control in quantum dynamical systems},
  author={Belavkin, Viacheslav},
  booktitle={Information Complexity and Control in Quantum Physics: Proceedings of the 4th International Seminar on Mathematical Theory of Dynamical Systems and Microphysics Udine, September 4--13, 1985},
  pages={311--329},
  year={1987},
  organization={Springer}
}

@article{Mirrahimi2007Stabilizing,
author = { Mirrahimi, Mazyar and  Van Handel, Ramon},
title = {Stabilizing Feedback Controls for Quantum Systems},
journal = {SIAM Journal on Control and Optimization},
volume = {46},
number = {2},
pages = {445-467},
year = {2007},
doi = {10.1137/050644793},
eprint = { https://doi.org/10.1137/050644793},
abstract = { No quantum measurement can give full information on the state of a quantum system; hence any quantum feedback control problem is necessarily one with partial observations and can generally be converted into a completely observed control problem for an appropriate quantum filter as in classical stochastic control theory. Here we study the properties of controlled quantum filtering equations as classical stochastic differential equations. We then develop methods, using a combination of geometric control and classical probabilistic techniques, for global feedback stabilization of a class of quantum filters around a particular eigenstate of the measurement operator. }
}

@article{Adler2001Martingale,
doi = {10.1088/0305-4470/34/42/306},
year = {2001},
month = {oct},
publisher = {},
volume = {34},
number = {42},
pages = {8795},
author = {S L Adler and  D C Brody and  T A Brun and  L P Hughston},
title = {Martingale models for quantum state reduction},
journal = {Journal of Physics A: Mathematical and General},
abstract = {Stochastic models for quantum state reduction give rise to statistical laws that are in most respects in agreement with those of quantum measurement theory. Here we examine the correspondence of the two theories in detail, making a systematic use of the methods of martingale theory. An analysis is carried out to determine the magnitude of the fluctuations experienced by the expectation of the observable during the course of the reduction process and an upper bound is established for the ensemble average of the greatest fluctuations incurred. We consider the general projection postulate of Lüders applicable in the case of a possibly degenerate eigenvalue spectrum, and derive this result rigorously from the underlying stochastic dynamics for state reduction in the case of both a pure and a mixed initial state. We also analyse the associated Lindblad equation for the evolution of the density matrix, and obtain an exact time-dependent solution for the state reduction that explicitly exhibits the transition from a general initial density matrix to the Lüders density matrix. Finally, we apply Girsanov's theorem to derive a set of simple formulae for the dynamics of the state in terms of a family of geometric Brownian motions, thereby constructing an explicit unravelling of the Lindblad equation.}
}

@INPROCEEDINGS{McAllister2021Stochastic,
  author={McAllister, Robert D. and Rawlings, James B.},
  booktitle={2021 60th IEEE Conference on Decision and Control (CDC)}, 
  title={Stochastic Exponential Stability of Nonlinear Stochastic Model Predictive Control}, 
  year={2021},
  volume={},
  number={},
  pages={880-885},
  keywords={Stability;Stochastic systems;Conferences;Stochastic processes;Control theory;Optimization;Predictive control},
  doi={10.1109/CDC45484.2021.9683161}}

@article{Liu2016Comparing,
  title = {Comparing and Combining Measurement-Based and Driven-Dissipative Entanglement Stabilization},
  author = {Liu, Y. and Shankar, S. and Ofek, N. and Hatridge, M. and Narla, A. and Sliwa, K. M. and Frunzio, L. and Schoelkopf, R. J. and Devoret, M. H.},
  journal = {Phys. Rev. X},
  volume = {6},
  issue = {1},
  pages = {011022},
  numpages = {18},
  year = {2016},
  month = {Mar},
  publisher = {American Physical Society},
  doi = {10.1103/PhysRevX.6.011022},
}

@article{Campagne2016Using,
  title = {Using Spontaneous Emission of a Qubit as a Resource for Feedback Control},
  author = {Campagne-Ibarcq, P. and Jezouin, S. and Cottet, N. and Six, P. and Bretheau, L. and Mallet, F. and Sarlette, A. and Rouchon, P. and Huard, B.},
  journal = {Phys. Rev. Lett.},
  volume = {117},
  issue = {6},
  pages = {060502},
  numpages = {5},
  year = {2016},
  month = {Aug},
  publisher = {American Physical Society},
  doi = {10.1103/PhysRevLett.117.060502},
}

@article{Lorenzen2019Stochastic,
author = {Lorenzen, Matthias and Müller, Matthias A. and Allgöwer, Frank},
title = {Stochastic model predictive control without terminal constraints},
journal = {International Journal of Robust and Nonlinear Control},
volume = {29},
number = {15},
pages = {4987-5001},
keywords = {constrained control, MPC without terminal constraints, nonlinear systems, stochastic model predictive control},
doi = {https://doi.org/10.1002/rnc.3912},
eprint = {https://onlinelibrary.wiley.com/doi/pdf/10.1002/rnc.3912},
abstract = {Summary Sufficient conditions for the stability of stochastic model predictive control without terminal cost and terminal constraints are derived. Analogous to stability proofs in the nominal setup, we first provide results for the case of optimization over general feedback laws and exact propagation of the probability density functions of the predicted states. We highlight why these results, being based on the principle of optimality, do not directly extend to currently used computationally tractable approximations such as optimization over parameterized feedback laws and relaxation of the chance constraints. Based thereon, for both cases, stability results are derived under stronger assumptions. A third approach is presented for linear systems where propagation of the mean value and the covariance matrix of the states instead of the complete distribution is sufficient, and hence, the principle of optimality can be used again. The main results are presented for nonlinear systems along with examples and computational simplifications for linear systems.},
year = {2019}
}

@ARTICLE{Carpintero2021Convergence,
  author={Muñoz-Carpintero, Diego and Cannon, Mark},
  journal={IEEE Transactions on Automatic Control}, 
  title={Convergence of Stochastic Nonlinear Systems and Implications for Stochastic Model-Predictive Control}, 
  year={2021},
  volume={66},
  number={6},
  pages={2832-2839},
  keywords={Convergence;Stochastic processes;Stability analysis;Economic indicators;Nonlinear systems;Asymptotic stability;Robustness;Asymptotic stability;nonlinear control systems;predictive control;stochastic systems},
  doi={10.1109/TAC.2020.3011845}}

@article{vijay2012stabilizing,
  title={Stabilizing {R}abi oscillations in a superconducting qubit using quantum feedback},
  author={Vijay, R and Macklin, Chris and Slichter, DH and Weber, SJ and Murch, KW and Naik, Ravi and Korotkov, Alexander N and Siddiqi, Irfan},
  journal={Nature},
  volume={490},
  number={7418},
  pages={77--80},
  year={2012},
  publisher={Nature Publishing Group UK London}
}

@article{liang2019exponential,
  title={On exponential stabilization of {N}-level quantum angular momentum systems},
  author={Liang, Weichao and Amini, Nina H and Mason, Paolo},
  journal={SIAM Journal on Control and Optimization},
  volume={57},
  number={6},
  pages={3939--3960},
  year={2019},
  publisher={SIAM}
}

@article{lin2020time,
  title={Time-optimal control of a dissipative qubit},
  author={Lin, Chungwei and Sels, Dries and Wang, Yebin},
  journal={Physical Review A},
  volume={101},
  number={2},
  pages={022320},
  year={2020},
  publisher={APS}
}

@article{Kimura2003The,
title = {The Bloch vector for {N}-level systems},
journal = {Physics Letters A},
volume = {314},
number = {5},
pages = {339-349},
year = {2003},
issn = {0375-9601},
doi = {https://doi.org/10.1016/S0375-9601(03)00941-1},
author = {Gen Kimura},
abstract = {We determine the set of the Bloch vectors for N-level systems, generalizing the familiar Bloch ball in 2-level systems. An origin of the structural difference from the Bloch ball in 2-level systems is clarified.}
}

@article{liang2020estimation,
  title={On estimation and feedback control of spin-1/2 systems with unknown initial states},
  author={Liang, Weichao and Amini, Nina H and Mason, Paolo},
  journal={IFAC-PapersOnLine},
  volume={53},
  number={2},
  pages={5863--5868},
  year={2020},
  publisher={Elsevier}
}

@article{amini2014stability,
  title={Stability of continuous-time quantum filters with measurement imperfections},
  author={Amini, Hadis and Pellegrini, Cl{\'e}ment and Rouchon, Pierre},
  journal={Russian Journal of Mathematical Physics},
  volume={21},
  number={3},
  pages={297--315},
  year={2014},
  publisher={Springer}
}

@article{rouchon2022tutorial,
  title={A tutorial introduction to quantum stochastic master equations based on the qubit/photon system},
  author={Rouchon, Pierre},
  journal={Annual Reviews in Control},
  volume={54},
  pages={252--261},
  year={2022},
  publisher={Elsevier}
}

@article{liang2021robust,
  title={Robust feedback stabilization of N-level quantum spin systems},
  author={Liang, Weichao and Amini, Nina H and Mason, Paolo},
  journal={SIAM Journal on Control and Optimization},
  volume={59},
  number={1},
  pages={669--692},
  year={2021},
  publisher={SIAM}
}

@article{liang2025exploring,
  title={Exploring the robustness of stabilizing controls for stochastic quantum evolutions},
  author={Liang, Weichao and Ohki, Kentaro and Ticozzi, Francesco},
  journal={SIAM Journal on Control and Optimization},
  pages={S148--S174},
  year={2025},
  publisher={SIAM}
}

@article{song2016continuous,
  title={Continuous-mode multiphoton filtering},
  author={Song, HT and Zhang, GF and Xi, ZR},
  journal={SIAM Journal on Control and Optimization},
  volume={54},
  number={3},
  pages={1602--1632},
  year={2016},
  publisher={SIAM}
}

@article{cardona2020exponential,
  title={Exponential stabilization of quantum systems under continuous non-demolition measurements},
  author={Cardona, Gerardo and Sarlette, Alain and Rouchon, Pierre},
  journal={Automatica},
  volume={112},
  pages={108719},
  year={2020},
  publisher={Elsevier}
}

@article{Sayrin2011real,
  title={Real-time quantum feedback prepares and stabilizes photon number states},
  author={Sayrin, Cl{\'e}ment and Dotsenko, Igor and Zhou, Xingxing and Peaudecerf, Bruno and Rybarczyk, Th{\'e}o and Gleyzes, S{\'e}bastien and Rouchon, Pierre and Mirrahimi, Mazyar and Amini, Hadis and Brune, Michel and others},
  journal={Nature},
  volume={477},
  number={7362},
  pages={73--77},
  year={2011},
  publisher={Nature Publishing Group UK London}
}

@article{amini2013feedback,
  title={Feedback stabilization of discrete-time quantum systems subject to non-demolition measurements with imperfections and delays},
  author={Amini, Hadis and Somaraju, Ram A and Dotsenko, Igor and Sayrin, Cl{\'e}ment and Mirrahimi, Mazyar and Rouchon, Pierre},
  journal={Automatica},
  volume={49},
  number={9},
  pages={2683--2692},
  year={2013},
  publisher={Elsevier}
}

@book{durrett2019probability,
  title={Probability: theory and examples},
  author={Durrett, Rick},
  volume={49},
  year={2019},
  publisher={Cambridge university press}
}

\end{document}